\documentclass{style/vldb}
\usepackage{amssymb}
\usepackage{amsmath}
\usepackage{color}
\usepackage{times}
\usepackage{cite}
\usepackage{enumitem}
\usepackage{arydshln}
\usepackage{url}
\usepackage{algorithm}
\usepackage{algorithmic}
\usepackage{flushend}
\usepackage{subcaption}
\usepackage{todonotes}
\usepackage{balance}

\usepackage{enumitem}
\setlist{nolistsep,leftmargin=*}

\title{Query Reranking As A Service}
\numberofauthors{1}
\author
{
\alignauthor
Abolfazl Asudeh$^{\ddag}$,
Nan Zhang$^{\dag\dag}$, 
Gautam Das$^{\ddag}$
\\
\affaddr {
$^{\ddag}$University of Texas at Arlington;
$^{\dag\dag}$George Washington University
}
{\email
{
$^{\ddag}$\{ab.asudeh@mavs,~gdas@cse\}.uta.edu, $^{\dag\dag}$nzhang10@gwu.edu
}
}
}
\date{}


\newcommand{\hindent}[1][1]{\hspace{#1\algorithmicindent}}

\usepackage[font={small,bf}]{caption} 
\setlength{\textfloatsep}{3pt} 
\setlength{\floatsep}{3pt} 
\setlength{\intextsep}{3pt} 
\dbltextfloatsep 8pt plus 2pt minus 4pt 
\dblfloatsep 11pt plus 2pt minus 2pt 

\begin{document}
\maketitle

\begin{abstract}
The ranked retrieval model has rapidly become the de facto way for search query processing in client-server databases, especially those on the web.  Despite of the extensive efforts in the database community on designing better ranking functions/mechanisms, many such databases in practice still fail to address the diverse and sometimes contradicting preferences of users on tuple ranking, perhaps (at least partially) due to the lack of expertise and/or motivation for the database owner to design truly effective ranking functions. This paper takes a different route on addressing the issue by defining a novel {\em query reranking problem}, i.e., we aim to design a third-party service that uses nothing but the public search interface of a client-server database to enable the on-the-fly processing of queries with any user-specified ranking functions (with or without selection conditions), no matter if the ranking function is supported by the database or not. We analyze the worst-case complexity of the problem and introduce a number of ideas, e.g., on-the-fly indexing, domination detection and virtual tuple pruning, to reduce the average-case cost of the query reranking algorithm. We also present extensive experimental results on real-world datasets, in both offline and live online systems, that demonstrate the effectiveness of our proposed techniques.
\end{abstract}
\section{Introduction}\label{sec1}
\noindent
{\bf Problem Motivation:} The ranked retrieval model has rapidly replaced the traditional Boolean retrieval model as the de facto way for query processing in client-server (e.g., web) databases.  Unlike the Boolean retrieval model which returns all tuples matching the search query selection condition, the ranked retrieval model orders the matching tuples according to an often proprietary ranking function, and returns the top-$k$ tuples matching the selection condition (with possible page-turn support for retrieving additional tuples).

The ranked retrieval model naturally fits the usage patterns of client-server databases. For example, the short attention span of clients such as web users demands the most desirable tuples to be returned first. In addition, to achieve a short response time (e.g., for web databases), it is essential to limit the length of returned results to a small value such as $k$.  Nonetheless, the ranked retrieval model also places more responsibilities on the web database designer, as the {\em ranking function} design now becomes a critical feature that must properly capture the need of database users.

In an ideal scenario, the database users would have fairly homogeneous preferences on the returned tuples (e.g., newer over older product models, cheaper over more expensive goods), so that the database owner can provide a small number of ranking functions from which the database users can choose to fulfill their individual needs. Indeed, the database community has developed many ranking function designs and techniques for the efficient retrieval of top-$k$ query answers according to a given ranking function.

The practical situation, however, is often much more complex.  Different users often have diverse and sometimes contradicting preferences on numerous factors. Even more importantly, many database owners simply lack the expertise, resources, or even motivation (e.g., in the case of government web databases created for policy or legal compliance purposes) to properly study the requirements of their users and design the most effective ranking functions. For example, many flight-search websites, including Kyak, Google Flights, Sky Scanner, Expedia, and Priceline offer limited ranking options on a subset of the attributes, that, for example, does not help ranking based on cost per mileage. Similar limitations apply to the websites such as Yahoo!~Autos (resp.~Blue Nile), if we want to rank the results, for example, based on mileage per year (resp.~summation of depth and table percent).
As a result, there is often a significant gap, in terms of both design and diversity, between the ranking function(s) supported by the client-server database and the true preferences of the database users. The objective of this paper is to define and study the {\em query re-ranking} problem which bridges this gap for real-world client-server databases.

\vspace{2mm}
\noindent{\bf Query Re-Ranking:} Given the challenge for a real-world database owner to provide a comprehensive coverage of user-preferred ranking functions, in this paper we develop a {\em third-party query re-ranking service} which uses nothing but the public search interface of a client-server database to enable the on-the-fly processing of queries with user-specified ranking functions (with or without selection conditions), no matter if the ranking function is supported by the database or not.

This query re-ranking service can enable a wide range of interesting applications.  For example, one may build a personalized ranking application using this service, offering users with the ability to remember their preferences across multiple web databases (e.g., multiple car dealers) and apply the same personalized ranking over all of them despite the lack of such support by these web databases. As another example, one may use the re-ranking service to build a dedicated application for users with disabilities, special needs, etc., to enjoy appropriate ranking over databases that do not specifically tailor to their needs.

There are two critical requirements for a solution to the query re-ranking service: First, the output query answer must precisely follow the user-specified ranking function, i.e., there is no loss of accuracy and the query re-ranking service is transparent to the end user as far as query answers are concerned. Second, the query re-ranking service must minimize the number of queries it issues to the client-server database in order to answer a user-specified query.  This requirement is crucial for two reasons: First is to ensure a fast response time to the user query, given that queries to the client-server database must be issued on the fly. Second is to reduce the burden on the client-server database, as many real-world ones, especially web databases, enforce stringent rate limits on queries from the same IP address or API user (e.g., Google Flight Search API allows only 50 free queries per user per day).

\vspace{2mm}
\noindent{\bf Problem Novelty:} While extensive studies have focused on translating an unsupported query to multiple search queries supported by a database, there has not been research on the {\em translation of ranking} requirements of queries. Related to our problem here includes the existing studies on crawling client-server databases \cite{sheng2012optimal}, as a baseline solution for query re-ranking is to first crawl all tuples from the client-server database, and then process the user query and ranking function locally. The problem, however, is the high query cost. As proved in \cite{sheng2012optimal}, the number of queries that have to be issued to the client-server database for crawling ranges from at least linear to the database size in the best-case scenario to quadratic and higher in worse cases. As such, it is often prohibitively expensive to apply this baseline to real-world client-server databases, especially those large-scale web databases that constantly change over time.

Another seemingly simple solution is for the third-party service to retrieve more than $k$ tuples matching the user query, say $h \cdot k$ tuples by using the ``page-down'' feature provided by a client-server database (or \cite{thirumuruganathan2013breaking, thirumuruganathan2013rank} when such a feature is unavailable), and then locally re-rank the $h \cdot k$ tuples according to the user-specified ranking function and return the top-$k$ ones. There are two problems with this solution. First, since many client-server databases choose not to publish the design of their proprietary ranking functions (e.g., simply naming it ``rank by popularity'' in web databases), results returned by this approach will have unknown error unless all tuples satisfying the user query are crawled. Second, when the database ranking function differs significantly from the user-specified one, this approach may have to issue many page-downs (i.e., a large $h$) in order to retrieve the real top-$k$ answers according to the user-specified ranking function.

Finally, note that our problem stands in sharp contrast with existing studies on processing top-$k$ queries over traditional databases using pre-built indices and/or materialized views (e.g., \cite{chang2000onion, PREFER}).  The key difference here is the underlying {\em data access model}: Unlike prior work which assume complete access to data, we are facing a restricted, top-$k$, search interface provided by the database.

%

\vspace{2mm}
\noindent {\bf Outline of Technical Results:} We start by considering a simple instance of the problem, where the user-desired ranking function is on a single attribute, and developing Algorithm 1D-RERANK to solve it. Note that this special, 1D, case not only helps with explaining the key technical challenges of query reranking, but also can be surprisingly useful for real-world web databases. For example, a need often arising in flight search is to maximize or minimize the layover time, so as to either add a free stopover for a sightseeing day trip or to minimize the amount of agonizing time spent at an airport.  Unfortunately, while flight search websites like Kayak offer the ability to specify a range query on layover time, it does not support ranking according to the attribute.  The 1D-RERANK algorithm handily addresses this need by enabling a ``Get-Next'' primitive - i.e., upon given a user query $q$, an attribute $A_i$, and the top-$h$ tuples satisfying $q$ according to $A_i$, it finds the ``next'', i.e., $(h + 1)$-th ranked, tuple.

In the development of 1D-RERANK, we rigidly prove that, in the worst-case scenario, retrieving even just the top-1 tuple requires crawling of the entire database. Nonetheless, we also show that the practical query cost tends to be much smaller. Specifically, we found a key factor (negatively) affecting query cost to be what we refer to as ``dense regions'' - i.e., a large number of tuples clustering together within a small interval (on the attribute under consideration). The fact that a dense region may be queried again and again (by the third-party query reranker) for the processing of different user queries prompts us to propose an {\em on-the-fly indexing} idea that detects such dense regions and proactively crawls top-ranked tuples in it to avoid the waste on processing future user queries. We demonstrate theoretically and experimentally the effectiveness of such an index on reducing the overall query cost.

To solve the general problem of query reranking for any arbitrary user-desired ranking function (rather than just 1D), a seemingly simple solution is to directly apply a classic top-$k$ query processing algorithm that leverages sorted access to each attribute, e.g., Fagin's or TA algorithm \cite{fagin2003}, by calling the ``Get-Next'' primitive provided by 1D-RERANK as a subroutine. The problem with this simple solution, however, is that it incurs a significant waste of queries when applied to client-server databases, mainly because it fails to leverage the multi-predicate (conjunctive) queries supported by the underlying database. We demonstrate in the paper that this problem is particularly significant when a large number of tuples satisfying a user query feature extreme values on one or more attributes.

To address the issue, we develop MD-RERANK (i.e., Multi-Dimensional Rerank), a query re-ranking algorithm that identifies a small number of multi-predicate queries to directly retrieve the top-$k$ tuples according to a user query. We note a key difference between the 1D and MD cases: In the 1D case, a single query is enough to cover the subspace outranking a given tuple, while the MD case requires a much larger number of queries due to the more complex shape of the subspace. We develop two main ideas, namely {\em direct domination detection} and {\em virtual tuple pruning}, to significantly reduce the query cost for MD-RERANK.  In addition, like in the 1D case, we observe the high query cost incurred by ``dense regions'', and include in MD-RERANK our on-the-fly indexing idea to reduce the amortized cost of query re-ranking.

Our contributions also include a comprehensive set of experiments on real-world web databases, both in an offline setting (for having the freedom to control the database settings) and through online {\em live} experiments over real-world web databases. Specifically, we constructed a Top-$k$ web search interface in the offline experiment, and evaluated the performance of the algorithms in different situations, by varying the parameters such as database size, system-$k$, and system ranking function. In addition we also tested our algorithms live online over two popular websites, namely Yahoo!~Autos and and Blue Nile, the largest diamond online retailer. The experiment results verify the effectiveness of our proposed techniques and their superiority over the baseline competitors.

The rest of the paper is organized as follows. We provide the preliminary notions and problem definition in \S~\ref{sec2}. Then, we consider the 1D case in \S~\ref{sec3}, proving a lower bound on the worst-case query cost for query reranking and developing the on-the-fly reranking idea that significantly reduces query cost in practice for 1D-RERANK, as demonstrated in theoretical analysis.  In \S~\ref{sec4}, we study the general query reranking problem and developing the other two ideas, direct domination detection and virtual tuple pruning, for MD-RERANK. After discussing the extensions in \S~\ref{sec5}, we present a comprehensive set of experimental results in \S~\ref{sec:experiments}.  We discuss the related work in \S~\ref{sec-related}, followed by final remarks in \S~\ref{sec-final}.


\section{Preliminaries}\label{sec2}
\subsection{Database Model}

\noindent
{\bf Database:} Consider a client-server database $D$ with $n$ tuples over $m$ ordinal attributes $A_1, \ldots, A_m$.  Let the value domain of $A_i$ be $V(A_i) = \{v_{i1}, \ldots, v_{i|V(A_i)|}\}$.  The database may also have other categorical attributes $B_1, \ldots, B_{m^\prime}$. But since they are usually not part of any ranking function, they are not the focus of our attention for the purpose of this paper. We assume each tuple $t$ to have a none-NULL value on each (ordinal) attribute $A_i$, which we refer to as $t[A_i]$ ($t[A_i] \in V(A_i)$). Note that if NULL values do exist in the database, the ranking function usually substitutes it with another default value (e.g., the mean or extreme value of an attribute). In that case, we simply consider the occurrence of NULL as the substituted value.  In most part of the paper, we make the general positioning assumption \cite{yale1968geometry}, before introducing a simple post-processing step that removes this assumption in \S~\ref{sec5}.

\vspace{2mm}
\noindent
{\bf Query Interface:}
Most client-server database allow users to issue certain ``simplistic'' search queries. Often these queries are limited to conjunctive ones with predicates on one or a few attributes. Examples here include web databases, which usually allows such conjunctive queries to be specified through a form-like web search interface. Formally, we consider search queries of the form

\begin{center}
$q$: SELECT * FROM D WHERE $A_{i_1} \in (v_{i_1}, v^\prime_{i_1})$ AND $\cdots$ AND $A_{i_p} \in (v_{i_p}, v^\prime_{i_p})$ AND {\em conjunctive predicates on $B_1, \ldots, B_{m^\prime}$},
\end{center}

\noindent where $\{A_{i_1}, \ldots, A_{i_p}\} \subseteq \{A_1, \ldots, A_m\}$ is a subset of ordinal attributes, and $(v_{i_j}, v^\prime_{i_j}) \subseteq V(A_{i_j})$ is a range within the value domain of $A_{i_j}$. 

A subtle issue here is that our definition of $q$ only includes open ranges $(x, y)$, i.e., $x < A_i < y$, while real-world client-server databases may offer close ranges $[x, y]$, i.e., $x \leq A_i \leq y$, or a combination of both (e.g., $(x, y]$). We note that these minor variations do not affect the studies in this paper, because it is easy to derive the answer to $q$ even when only close ranges are allowed by database: One simple needs to find a value arbitrarily close to the limits, say $x + \epsilon$ and $y - \epsilon$ with an arbitrarily small $\epsilon > 0$, and substitute $(x, y)$ with $[x + \epsilon, y - \epsilon]$. In the case where the value domains are discrete, substitutions can be made to the closest discrete value in the domain.

 

As discussed in \S~\ref{sec1}, once a client-server database receives query $q$ from a user, it often limits the number of returned tuples to a small value $k$.  Without causing ambiguity, we use $q$ to refer to the set of tuples {\em actually returned} by $q$, $R(q)$ to refer to the the set of tuples matching $q$ (which can be a proper superset of the returned tuples $q$ when there are more than $k$ returning tuples, and $|R(q)|$ to refer to the number of tuples matching $q$.  When $|R(q)| > k$, we say that $q$ {\em overflows} because only $k$ tuples can be returned. Otherwise, if $|R(q)| \in [1, k]$, we say that $q$ returns a {\em valid} answer. At the other extreme, we say that $q$ {\em underflows} when it returns empty, i.e., $|R(q)| = 0$.

\vspace{2mm}
\noindent
{\bf System Ranking Function:} In most parts of the paper, we make a conservative assumption that, when $|R(q)| > k$, the database selects the $k$ returned tuples from $R(q)$ according to a proprietary {\em system ranking function} unbeknown to the query reranking service. That is, we make {\em no} assumption about the system ranking function whatsoever.  In \S~\ref{sec5}, we also consider cases where the database offers more ranking options, e.g., ORDER BY according to a subset of ordinal attributes.










\break
\subsection{Problem Definition}

The objective of this paper is to enable a {\em third-party query reranking service} which enables a user-specified ranking function for a user-specified query $q$, when the query $q$ is supported by the underlying client-server database but the ranking function is {\em not}.

\vspace{2mm}
\noindent {\bf User-Specified Ranking Functions:} We allow a user of the query reranking service to specify a {\em user-specified ranking function} $\mathcal{S}(q, t)$ which takes as input the user query $q$ and one or more ordinal attributes (i.e., $A_1, \ldots, A_m$) of a tuple $t$, and outputs the ranking score for $t$ in processing $q$.  The {\em smaller} the score $\mathcal{S}(q, t)$ is, the {\em higher ranked} $t$ will be in the query answer, i.e., the more likely $t$ is included in the query answer when $R(q) > k$. Without causing ambiguity, we also represent $\mathcal{S}(q, t)$ as $\mathcal{S}(t)$ when the context (i.e., the user query being processed) is clear.

We support a wide variety of user-specified ranking functions with only one requirement: {\em monotonicity}. Given a user query $q$, a ranking function $\mathcal{S}(t)$ is monotonic if and only if there exists an order of values for each attribute domain, which we represent as $\prec$ with $v_1 \prec v_2$ indicating $v_1$ being higher-ranked than $v_2$, such that there does not exist two possible tuple values $t_1$ and $t_2$ with $\mathcal{S}(t_1) < \mathcal{S}(t_2)$ yet $t_2[A_i] \prec t_1[A_i]$ for all $i \in [1, m]$.

Intuitively, the definition states that if $t_1$ outranks $t_2$ according to $\mathcal{S}(\cdot)$, then $t_1$ has to outrank $t_2$ on at least one attribute according to the order $\prec$.  In other words, $t_1$ cannot outrank $t_2$ if it is dominated \cite{chomicki} by $t_2$. Another interesting note here is that we do {\em not} require all user-specified ranking functions to follow the same attribute-value order $\prec$. For example, one ranking function may prefer higher prices while the other prefers lower prices.  We support both ranking functions so long as each is monotonic according to its own order of attribute values.


\vspace{2mm}
\noindent {\bf Performance Measure:} To enable query reranking, we have to issue a number of queries to the underlying client-server database. It is important to understand that the most important efficiency factor here is {\it the total number of queries} issued to the database, not the computational time. The rational behind it is that almost many client-server databases, e.g., almost all client-server databases, enforce certain {\em query-rate limit} by allowing only a limited number of queries per day from each IP address, API account, etc.

\vspace{2mm}
\noindent {\bf Problem Definition:} In this paper, we consider the problem of query reranking in a ``Get-Next'', i.e., incremental processing, fashion.  That is, for a given user query $q$, a user-specified ranking function $\mathcal{S}$, and the top-$h$ tuples satisfying $q$ according to $\mathcal{S}$, we aim to find the No.~$(h+1)$ tuple. When $h = 0$, this means finding the top-1 for given $q$ and $\mathcal{S}$.  One can see that finding the top-$h$ tuples for $q$ and $\mathcal{S}$ can be easily solved by repeatedly calling the Get-Next function.  The reason why we define the problem in this fashion is to address the real-world scenario where a user first retrieves the top-$h$ answers and, if still unsatisfied with the returned tuples, proceeds to ask for the No.~$(h+1)$. By supporting incremental processing, we can progressively return top answers while paying only the incremental cost.

\medskip\noindent
 \framebox[\columnwidth]{\parbox{0.9\columnwidth}{ \textsc{Query reranking Problem:}
Consider a client-server database $D$ with a top-$k$ interface and an arbitrary, unknown, system ranking function. Given a user query $q$, a user-specified monotonic ranking function $\mathcal{S}$, and the top-$h$ ($h \geq 0$ can be greater than, equal to, or smaller than $k$) tuples satisfying $q$ according to $\mathcal{S}$, discover the No.~$(h+1)$ tuple for $q$ while minimizing the number of queries issued to the client-server database $D$.
}}

%
%
%
\pagebreak
\section{1D-RERANK} \label{sec3}


We start by considering the simple 1D version of the query reranking problem which, as discussed in the introduction, can also be surprisingly useful in practice. Specifically, for a given attribute $A_i$, a user query $q$, and the $h$ tuples having the minimum values of $A_i$ among $R(q)$ (i.e., tuples satisfying $q$), our goal here is to find tuple $t(q, A_i, h+1)$, which satisfies $q$ and has the $(h + 1)$-th smallest value on $A_i$ among $R(q)$, while minimizing the number of queries issued to the underlying database.


\subsection{Baseline Solution and Its Problem}

\subsubsection{1D-BASELINE}
\noindent{\bf Baseline Design:} Since our focus here is to discover $t(q, A_i, h+1)$ given $q$, $A_i$ and $h$, without causing ambiguity, we use $t_{h+1}$ as a short-hand representation of $t(q, A_i, h+1)$.  A baseline solution for finding $t_{h+1}$ is to start with issuing to the underlying database query $q_1$: SELECT * FROM D WHERE $A_i > t_h[A_i]$ AND $Sel(q)$, where $Sel(q)$ represents all selection conditions specified in $q$.  If $h = 0$, this query simply becomes SELECT * FROM D WHERE $Sel(q)$.

Note that the answer to $q_1$ must return non-empty, because otherwise it means there are only $h$ tuples matching $q$.  Let $a_1$ be the one having minimum $A_i$ among all returned tuples. Given $a_1$, the next query we issue is $q_2$: WHERE $A_i \in (t_h[A_i], a_1[A_i])$ AND $Sel(q)$.  In other words, we narrow the search region on $A_i$ to ``push the envelop'' and discover any tuple with even ``better'' $A_i$ than what we have seen so far.

If $q_2$ returns empty, then $t_{h+1} = a_1$.  Otherwise, we can construct and issue $q_3$, $q_4$, $\ldots$, in a similar fashion. More generally, given $a_j$ being the tuple with minimum $A_i$ returned by $q_j$, the next query we issue is $q_{j+1}$: WHERE $A_i \in (t_h[A_i], a_j[A_i])$ AND $Sel(q)$. We stop when $q_{j+1}$ returns empty, at which time we conclude $t_{h+1} = a_j$. Algorithm~\ref{alg:1D-baseline}, 1D-BASELINE, depicts the pseudo-code of this baseline solution.

\vspace{2mm}
\noindent{\bf Leveraging History:} An implementation issue worth noting for 1D-BASELINE is how to leverage the historic query answers we have already received from the underlying client-server database. This applies not only during the processing of a user query, but also across the processing of different user queries. 

During the process of user query $q$, for example, we do not have to start with the range of $A_i \in (t_h[A_i], \infty)$ as stated in the basic algorithm design. Instead, if we have already ``seen'' tuples in $R(q)$ that have $A_i > t_h[A_i]$ in the historic query answers, then we can first identify such a tuple with the minimum $A_i$, denoted by $t^\prime$, and then start the searching process with $A_i \in (t_h[A_i], t^\prime)$, a much smaller region that can yield significant query savings, as shown in the query cost analysis below.

More generally, this exact idea applies across the processing of different user queries.  What we can do is to inspect every tuple we have observed in historic query answers, identify those that match the user query being processed, and order these matching tuples according to the attribute $A_i$ under consideration. By doing so, the more queries we have processed, the more likely we can prune the search space for $t_{h+1}$ based on historic query answers, and thereby reduce the query cost for re-ranking. 

%

\begin{algorithm}[!htb]
\caption{{\bf 1D-BASELINE}}
\begin{algorithmic}[1]
\label{alg:1D-baseline}
\STATE $t_{h+1}$ = argmin$_{t[A_i]}\{t\in$ history $|$ $t[A_i] > t_h[A_i]\}$ 
\STATE $T$ = Top-$k$(WHERE $t_{h+1}[A_i]>A_i > t_h[A_i]$ AND $Sel(q)$)
\STATE {\bf while} $T$ is overflow
    \STATE \hindent $t_{h+1}$ = argmin$_{t[A_i]}\{t\in T\}$ 
    \STATE \hindent $T$ = Top-$k$(WHERE $t_{h+1}[A_i]>A_i > t_h[A_i]$ AND $Sel(q)$)
\STATE {\bf return} $t_{h+1}$
\end{algorithmic}
\end{algorithm}

\subsubsection{Negative Result: Lower Bound on Worst-Case Query Cost}

While simple, 1D-BASELINE has a major problem on query cost, as it depends on the correlation between $A_i$ and the system ranking function which we know nothing about and has no control over. For example, if the system ranking function is exactly according to $A_i$, then the query cost of finding $t_{h+1}$ is 2: $q_1$ returns $t_{h+1}$ and $q_2$ returns empty to confirm that $t_{h+1}$ is indeed the ``next'' tuple.  On the other hand, if the system ranking function is the exact opposite to $A_i$ (i.e., returning tuples with maximal $A_i$ first), then the query cost for the baseline solution is exactly $|R(q)| + 1$ in the worst-case scenario (when $k = 1$), because every tuple satisfying $q$ will be returned before $t_{h+1}$ is revealed at the end.  Granted, this cost can be ``amortized'' thanks to the leveraging-history idea discussed above, because the $|R(q)| + 1$ queries indeed reveal not just the top-$(h+1)$ but the complete ranking of all tuples matching $q$. Nonetheless, the query cost is still prohibitively high when $q$ matches a large number of tuples.

While it might be tempting to try to ``adapt to'' such ill-conditioned system ranking functions, the following theorem actually shows that the problem is not fixable in the worst-case sense. Specifically, there is a lower bound of $n/k$ on the query cost required for query reranking given the worst-case data distribution and worst-case system ranking function.

\newtheorem{theorem}{Theorem}
\begin{theorem} \label{thm:ner}
$\forall n > 1$, there exists a database of $n$ tuples such that finding the top-ranked tuple on an attribute through a top-$k$ search interface requires at least $n/k$ queries that retrieve all the $n$ tuples.
\end{theorem}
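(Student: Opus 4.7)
My plan is to prove the bound by exhibiting a single bad database together with an adversarially chosen system ranking function, and then showing via an indistinguishability argument that any correct algorithm must have retrieved every tuple before it can commit to a top-1. The construction I have in mind uses $n$ tuples that agree on every non-ordinal attribute but carry distinct values on $A_i$, and chooses the system ranking function to be exactly opposite to the user's desired order, so that the server's overflow policy is to return the $k$ tuples with the \emph{largest} $A_i$ from among those satisfying the query. Under this choice, the only way a query over a range $(a,b)$ can reveal information about the smallest-$A_i$ tuple in $(a,b)$ is if $(a,b)$ happens to contain at most $k$ matching tuples, in which case every member of the range is returned.

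The core of the proof is an adaptive adversary that does not commit the identifier-to-value mapping on $A_i$ up front. For each incoming query the adversary commits just enough bindings to produce a valid response under the ``largest-first'' rule above. Since each query reveals at most $k$ tuples, after $m < n/k$ queries at least one tuple $t^\star$ has never been returned, and its $A_i$ value has therefore never been committed. The adversary then finalizes the database by assigning $t^\star$ the smallest value on $A_i$, making $t^\star$ the true top-1. Because the algorithm has never seen $t^\star$, it cannot output $t^\star$ and must therefore be wrong. Hence any correct algorithm must issue at least $\lceil n/k\rceil$ queries, and those queries must collectively retrieve all $n$ tuples.

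The delicate step, and the one I would devote the most attention to, is verifying that the final ``assign $t^\star$ the smallest $A_i$'' commitment is consistent with every past response. This is where the opposite-direction choice of system ranking function matters: any previously overflowing query returned the $k$ largest $A_i$'s in its range, and adding a tuple with an even smaller $A_i$ to that range cannot change which tuples occupy that top-$k$ set; any previously non-overflowing query, on the other hand, fixed the entire matching set in its range, so the adversary must take care to place $t^\star$'s value outside all such ranges. I would handle this by maintaining, throughout the simulation, an uncovered ``reserve slot'' for the smallest $A_i$ value that lies outside every range the algorithm has already observed to be non-overflowing — feasible because the algorithm has never retrieved $t^\star$ and the $A_i$ domain can be chosen large enough to leave room. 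Making this bookkeeping airtight is the main obstacle in turning the sketch into a formal proof.
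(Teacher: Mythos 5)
Your proposal is correct and follows essentially the same route as the paper's own proof: an adaptive adversary that lazily commits $A_i$ values, answers overflowing queries with the ``wrong end'' of the range, and at the end places an unrevealed tuple below everything seen, yielding indistinguishability after fewer than $n/k$ queries. The ``reserve slot'' bookkeeping you flag as the delicate step is exactly the paper's min-query-threshold $v_q$, which it keeps available by halving the gap $(v_0, v_q)$ on every bottom-anchored query (thereby also forcing such queries to always overflow), so your sketch closes in precisely the way the paper does.
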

\begin{proof}
Without loss of generality, consider a database with only one attribute $A$ and an unknown ranking function.  Let $(v_0, v_\infty)$ be the domain of $A$.  Note that this means (1) the query re-ranking algorithm can only issue queries of the form SELECT * FROM D WHERE $A \in (v_1, v_2)$, where $v_0 \leq v_1 < v_2 \leq v_\infty$, (2) the returned tuples will be ranked in an arbitrary order, and (3) the objective of the query re-ranking algorithm is to find the tuple with the smallest $A$.

For any given query re-ranking algorithm $\mathcal{R}$, consider the following query processing mechanism $\mathcal{Q}$ for the database: During the processing of all queries, we maintain a min-query-threshold $v_q$ with initial value $v_\infty$. If a query $q$ issued by $\mathcal{R}$ has lower bound {\em not} equal to $v_0$, i.e., $q$: WHERE $A \in (v_1, v_2)$ with $v_1 > v_0$, $\mathcal{Q}$ returns whatever tuples already returned in historic query answers that fall into range $(v_1, v_2)$. It also sets $v_q = \min(v_q, v_1)$.

Otherwise, if $q$ is of the form WHERE $A \in (v_0, v_2)$ with $v_2 > v_0$, then $\mathcal{Q}$ returns an overflowing answer with $k$ tuples.  These $k$ tuples include those in the historic query answers that fall into $(v_0, v_2)$. If more than $k$ such tuples exist in the history, we choose an arbitrary size-$k$ subset. If fewer than $k$ such tuples exist, we fill up the remaining slots with arbitrary values in range $((v_0 + v_q)/2, v_q)$\footnote{Note that any factor here (besides 2) works too.  So in general the range can be $((v_0 + v_q) \cdot \alpha, v_q)$ so long as $\alpha > 0$.}. We also set $v_q$ to be $(v_0 + v_q)/2$.

There are two critical observations here. First is that for any query sequence $q_1, \ldots, q_h$ with $h \leq n/k$, we can always construct a database $D$ of at most $n$ tuples, such that the query answers generated by $\mathcal{Q}$ are consistent with what $D$ produces. Specifically, $D$ would simply be the union of all tuples returned. Note that our maintenance of $v_q$ ensures the consistency.

The second critical observation is that no query re-ranking algorithm $\mathcal{R}$ can find the tuple with the smallest $A$ without issuing at least $n/k$ queries. The reason is simple: since $n/k - 1$ queries cannot reveal all $n$ tuples, we can add a tuple $t$ with $A = (v_0 + v_q)/2$ to the database, where $v_q$ is its value after processing all $n/k - 1$ queries. One can see that the answers to all $n/k-1$ queries can remain the same. As such, for any $n > 1$, there exists a database containing $n$ tuples such that finding the top-ranked one for an attribute requires at least $n/k$ queries, which according to \cite{sheng2012optimal} is sufficient for crawling the entire database in a 1D space.
\end{proof}

\subsection{1D-RERANK}

Given the above result, we have to shift our attention to reducing the cost of finding $t_{h+1}$ in an average-case scenario, e.g., when the tuples are more or less uniformly distributed on $A_i$ (instead of forming a highly skewed distribution as constructed in the proof of Theorem~\ref{thm:ner}). To this end, we start this subsection by considering a binary-search algorithm.  After pointing out the deficiency of this algorithm when facing certain system ranking functions, we introduce our idea of on-the-fly indexing for the design of 1D-RERANK, our final algorithm for query reranking with a single-attribute user-specified ranking function.

\subsubsection{1D-BINARY and its Problem}
The binary search algorithm departs from 1D-BASELINE on the construction of $q_2$: Given $a_1$, instead of issuing $q_2$: WHERE $A_i \in (t_h[A_i], a_1[A_i])$ AND $Sel(q)$, we issue here
\begin{align*}
q^\prime_2: \mbox{ WHERE } A_i \in (t_h[A_i], (a_1[A_i] + t_h[A_i])/2) \mbox{ AND } Sel(q).
\end{align*}
This query has two possible outcomes: If it returns non-empty, we consider the returned tuple with minimum $A_i$, say $a_2$, and construct $q^\prime_3$ according to $a_2$.  The other possible outcome is for $q^\prime_2$ to return empty. In this case, we issue $q^{\prime\prime}_2$: WHERE $A_i \in [(a_1[A_i] + t_h[A_i])/2, a_1[A_i])$ AND $Sel(q)$, which has to return non-empty as otherwise $t_{h+1} = a_1$.  In either case, the {\em search space} (i.e., the range in which $t_{h+1}$ must reside) is reduced by at least half. Algorithm~\ref{alg:1D-binary}, 1D-BINARY, depicts the pseudocode.

\vspace{2mm}
\begin{algorithm}[!htb]
\caption{{\bf 1D-BINARY}}
\begin{algorithmic}[1]
\label{alg:1D-binary}
\STATE $t_{h+1}$ = argmin$_{t[A_i]}\{t\in$History $|$ $t[A_i] > t_h[A_i]\}$ 
\STATE {\bf do}
    \STATE \hindent $q\prime = $ WHERE $A_i \in (t_h[A_i], (t_{h+1}[A_i] + t_h[A_i])/2)$ AND $Sel(q)$
    \STATE \hindent $T$ = Top-$k$($q\prime$)
    \STATE \hindent {\bf if} $T$ is underflow
	    \STATE \hindent \hindent $q\prime$ = WHERE $A_i \in [(t_{h+1}[A_i] + t_h[A_i])/2,t_{h+1}[A_i])$ AND $Sel(q)$     
	    \STATE \hindent \hindent $T$ = Top-$k$($q\prime$)  
    \STATE \hindent {\bf if} $T$ is not underflow
	    \STATE \hindent \hindent $t_{h+1}$ = argmin$_{t[A_i]}\{t\in T\}$  
\STATE {\bf while} $T$ is overflow
\STATE {\bf return} $t_{h+1}$
\end{algorithmic}
\end{algorithm}

\vspace{2mm}
\noindent{\bf Query Cost Analysis:} While the design of 1D-BINARY is simple, the query-cost analysis of it yields an interesting observation which motivates the indexing-based design of our final 1D-RERANK algorithm.  Let
\begin{align}
\epsilon_k = t_{h+k+1}[A_i] - t_{h+1}[A_i].
\end{align}
An important observation here is that the execution of 1D-BINARY must conclude when the search space is reduced to width smaller than $\epsilon_k$, because no such range can cover $t_{h+1}[A_i]$ while matching more than $k$ tuples.  Thus, the worst-case query cost of 1D-BINARY is
\begin{align}
O(\min(\log_2(|V(q, A_i)|/\epsilon_k), |R(q)|/k)), \label{equ:bsq}
\end{align}
where $|V(q, A_i)|$ is the range of $A_i$ among tuples satisfying $q$ - i.e., $\max_{t \in R(q)} t[A_i] - \min_{t \in R(q)} t[A_i]$. Note that the second input to the $\min$ function in (\ref{equ:bsq}) is because every pair of queries issued by 1D-BINARY, i.e., $q^\prime_j$ and $q^{\prime\prime}_j$, must return at least $k$ tuples never seen before that satisfies $q$.

The query-cost bound in (\ref{equ:bsq}) illustrates both the effectiveness and the potential problem of Algorithm 1D-BINARY. On one hand, one can see that 1D-BINARY performs well when the tuples matching $q$ are uniformly distributed on $A_i$, because in this case the expected value of $\epsilon_k$ becomes $k \cdot |V(q, A_i)|/|R(q)|$, leading to a query cost of $O(\log_2 (|R(q)|/k))$.

On the other hand, 1D-BINARY still incurs a high query cost (as bad as $\Omega(|R(q)|/k)$, just as indicated by Theorem~\ref{thm:ner}) when two conditions are satisfied: (1) the system ranking function is ill-conditioned, i.e., negatively correlated with $A_i$, and (2) Within $R(q)$ there are {\em densely clustered} tuples with extremely close values on $A_i$, leading to a small $\epsilon_k$. Unfortunately, once the two conditions are met, the high query cost 1D-BINARY is likely to be incurred again and again for different user queries $q$, leading to an expensive reranking service. It is this observation which motivates our index-based reranking idea discussed next.

\subsubsection{Algorithm 1D-RERANK: On-The-Fly Indexing}

\noindent{\bf Oracle-based Design:} According to the above observation, densely clustered tuples cause a high query cost of 1D-BINARY. To address the issue, we start by considering an ideal scenario where there exists an oracle which identifies these ``dense regions'' and reveals the tuple with minimum $A_i$ in these regions without costing us any query. Of course, no such oracle exists in practice. Nevertheless, what we shall do here is to analyze the query cost of 1D-BINARY given such an oracle, and then show how this oracle can be ``simulated'' with a low-cost on-the-fly indexing technique.



Specifically, for any given region $[x, y] \in V(A_i)$, we call it a {\em dense region} if and only if it covers at least $s$ tuples {\em and} $y - x < |V(A_i)| \cdot (s/n) / c$, where $c$ and $s$ are parameters. In other words, the density of tuples in $[x, y]$ is more than $c$ times higher than the uniform distribution (which yields an expected value of $E(y-x) = |V(A_i)| \cdot (s/n)$). The setting of $c$ and $s$ is a subtle issue which we specifically address at the end of this subsection.  Given the definition of dense region, the oracle functions as follows: Upon given a user query $q$, an attribute $A_i$, and a range $[x, y] \subseteq V(A_i)$ as input, the oracle either returns empty if $[x, y]$ is not dense, or a tuple $t$ which (1) satisfies $q$, (2) has $A_i \in [x, y]$, and (3) features the smallest $A_i$ among all tuples satisfying (1) and (2).

With the existence of this oracle, we introduce a small yet critical revision to 1D-BINARY, by {\em terminating} binary search whenever the width of the search space becomes narrower than the threshold for dense region, i.e., $\epsilon_k < |V(A_i)| \cdot (s/n) / c$. Then, we call the oracle with the remaining search space as input.  Note that doing so may lead to two possible returns from the oracle:

One is when the region is indeed dense. In this case, the oracle will directly return us $t_{h+1}$ with zero cost. The other possible outcome is an empty return, indicating that the region is not really dense, instead containing more than $k$ (otherwise 1D-BINARY would have already terminated) but fewer than $s$ tuples. Note that this is not a bad outcome either, because it means that by following the baseline technique (1D-BASELINE) on the remaining search space, we can always find $t_{h+1}$ within $O(s/k)$ queries.


Algorithm~\ref{alg:1D-rerank} depicts the pseudocode of 1D-RERANK, the revised algorithm. The following theorem shows its query cost, which follows directly from the above discussions.

\vspace{2mm}
\begin{algorithm}[!htb]
\caption{{\bf 1D-RERANK}}
\begin{algorithmic}[1]
\label{alg:1D-rerank}
\STATE $t_{h+1}$ = argmin$_{t[A_i]}\{t\in$History $|$ $t[A_i] > t_h[A_i]\}$ 
\STATE {\bf while} $(t_{h+1}[A_i] - t_h[A_i]) < |V(A_i)| \cdot (s/n) / c$
    \STATE \hindent $q\prime$ = WHERE $A_i \in (t_h[A_i], (t_{h+1}[A_i] + t_h[A_i])/2)$ AND $Sel(q)$
    \STATE \hindent $T$ = Top-$k$($q\prime$)
    \STATE \hindent {\bf if} $T$ is underflow
	    \STATE \hindent \hindent  $q\prime$ = WHERE $A_i \in [(t_{h+1}[A_i] + t_h[A_i])/2,t_{h+1}[A_i])$ AND $Sel(q)$
	    \STATE \hindent \hindent $T$ = Top-$k$($q\prime$)  
    \STATE \hindent {\bf if} $T$ is not underflow   
	    \STATE \hindent \hindent $t_{h+1}$ = argmin$_{t[A_i]}\{t\in T\}$  
	\STATE \hindent {\bf if} $T$ is valid {\bf break}
\STATE {\bf if} $T$ is valid
	\STATE \hindent look up $t_{h+1}$ at ORACLE($A_i$,$(t_h[A_i],t_{h+1}[A_i])$,$q$)
\STATE {\bf return} $t_{h+1}$
\end{algorithmic}
\end{algorithm}

\begin{theorem}~\label{thm:qc}
The query cost of 1D-RERANK, with the presence of the oracle, is $O(\log(c \cdot n/s) + s/k)$.
\end{theorem}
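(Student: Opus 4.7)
The plan is to decompose the execution of 1D-RERANK into two phases and bound each contribution separately: phase (i), the binary search loop of lines 2--10, and phase (ii), the final oracle call on the residual interval at lines 11--12 (with the baseline fallback if the oracle returns empty).

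For phase (i), I would mirror the analysis of 1D-BINARY. The candidate interval at the start of each iteration is $(t_h[A_i], t_{h+1}[A_i])$, whose initial width is at most $|V(A_i)|$. Each iteration issues at most two queries: one on the left half $(t_h[A_i], (t_{h+1}[A_i]+t_h[A_i])/2)$, and, only if it underflows, a second one on the right half. In either case the new $t_{h+1}$ lies in whichever half did not underflow, so the width of the candidate interval is at least halved. Consequently, after $j$ iterations the width is at most $|V(A_i)|/2^j$, and the loop terminates (by its while condition) as soon as the width drops below the dense-region threshold $|V(A_i)| \cdot (s/n)/c$. This happens after at most $\lceil \log_2(c \cdot n/s) \rceil$ iterations, contributing $O(\log(c \cdot n/s))$ queries.

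For phase (ii), when control leaves the loop the residual interval has width below $|V(A_i)| \cdot (s/n)/c$. By the definition of a dense region there are two cases. Either the interval contains at least $s$ tuples, and hence is dense, in which case the oracle returns $t_{h+1}$ at zero additional query cost; or it contains fewer than $s$ tuples, in which case the oracle returns empty and we finish by running the 1D-BASELINE strategy on this narrow interval. In the second case every issued query retrieves $k$ previously unseen tuples lying inside the interval (otherwise the algorithm would already have terminated on a valid $T$), and the interval contains fewer than $s$ such tuples in total, so at most $\lceil s/k \rceil = O(s/k)$ additional queries suffice before $t_{h+1}$ is certified. Summing phases (i) and (ii) gives the claimed $O(\log(c \cdot n/s) + s/k)$ bound.

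The step I expect to be the most delicate is the bookkeeping at the boundary between the two phases: one has to argue that the halving invariant in phase (i) is preserved across the early-exit on a "valid" $T$, and, dually, that when the oracle returns empty the interval really contains strictly fewer than $s$ tuples so that the $s/k$ bound on phase (ii) is correct, rather than re-entering binary search and introducing an extra logarithmic factor. Once these two invariants are nailed down, the remainder is standard binary-search counting plus the already-established per-query progress guarantee of 1D-BASELINE.
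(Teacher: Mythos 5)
Your proposal is correct and follows essentially the same route as the paper's own proof: split the cost into the binary-search phase, bounded by $O(\log(c\cdot n/s))$ via the halving of the candidate interval down to the dense-region threshold, plus the residual phase, where either the oracle answers for free or the non-dense interval holds fewer than $s$ tuples and 1D-BASELINE finishes in $O(s/k)$ queries. The boundary concerns you flag are handled implicitly in the paper in exactly the way you anticipate, so no further changes are needed.
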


\begin{proof}
The query cost of of 1D-RERANK, with the presence of the oracle, is the summation of the following costs:
\begin{itemize}
\item $c_1$: the query cost of following 1D-BINARY, until the search space becomes narrower than the dense region threshold,
\item $c_2$: the query cost of discovering $t_{h+1}$ in the remaining region, using the oracle.
\end{itemize}

Following 1D-BINARY takes $O(\log_2(|V(q, A_i)|/\epsilon_k))$ queries. Because $\epsilon_k < |V(A_i)| \cdot (s/n) / c$, $c_1$ is in the order of $O(\log(c \cdot n/s))$.
As discussed previously, if the oracle does not include the remaining region, the region is not dense and contains fewer than $s$ tuples. Then, following 1D-BASELNE, at most $s/k$ queries are requires to discover $t_{h+1}$, i.e. $c_2$ is $O(s/k)$.
Consequently, the query cost of 1D-RERANK, with the presence of the oracle, is $O(\log(c \cdot n/s) + s/k)$.
\end{proof}


Note that the query cost indicated by the theorem is very small. For example, when $c = n$ and $s = k \cdot \log n$, the query cost is $O(\log n)$, substantially smaller than that of 1D-BINARY. Of course, the oracle does not exist in any real system. Thus, our goal next is to simulate this oracle with an efficient on-the-fly indexing technique.





\vspace{2mm}
\noindent{\bf On-The-Fly Indexing:} Our idea for simulating the oracle is simple: once 1D-RERANK decides to call the oracle with a range $(x, y)$, we invoke the 1D-BASELINE algorithm on SELECT * FROM D WHERE $A_i \in (x, y)$ to find the tuple $t$ with smallest $A_i$ in the range. If $t$ satisfies the user query $q$ being processed, then we can stop and output $t$. Otherwise, we call 1D-BASELINE on WHERE $A_i \in (t[A_i], y)$ to find the No.~2 tuple, and repeat this process until finding one that satisfies $q$.  All tuples discovered during the process are then added into the ``dense index'' that is maintained throughout the processing of all user queries.

Algorithm~\ref{alg:oracle} depicts the on-the-fly index building process. Note that the index we maintain is essentially a set of 3-tuples
\begin{align}
\langle A_i, (x, y), D(A_i, x, y))\rangle,
\end{align}
where $A_i$ is an attribute, $(x, y)$ is a range in $V(A_i)$ (non-overlapping with other indexed ranges of $A_i$), and $D(A_i, x, y)$ contains all (top-ranked) tuples we have discovered that have $A_i \in (x, y)$.

\vspace{2mm}
\begin{algorithm}[!htb]
\caption{{\bf ORACLE}}
\begin{algorithmic}[1]
\label{alg:oracle}
\STATE {\bf if} ORACLE($A_i$,$x,y$) exists
	\STATE \hindent {\bf return} argmin$_{t[A_i]}\{t\in D(A_i, x, y))|$ $t$ matches $Sel(q)\}$
\STATE $t$=1D-BASELINE(WHERE $A_i \in (x, y)$)
\STATE add $t$ to $D(A_i, x, y)$
\STATE {\bf while} $t$ does not satisfy $Sel(q)$
	\STATE \hindent $t$=1D-BASELINE(WHERE $A_i \in (t[A_i], y)$)
	\STATE \hindent add $t$ to $D(A_i, x, y)$
\STATE {\bf return} $t$
\end{algorithmic}
\end{algorithm}

Note that this simulation does differ a bit from the ideal oracle. Specifically, it does not really determine if the region is dense or not.  Even if the region is not dense, this simulated oracle still outputs the correct tuple. What we would like to note, however, is that this difference has no implication whatsoever on the query cost of 1D-RERANK. Specifically, what happens here is simply that the on-the-fly indexing process pre-issues the queries 1D-RERANK is supposed to issue when the oracle returns empty. The overall query cost remains exactly the same.

Another noteworthy design in on-the-fly indexing is the call of 1D-BASELINE on SELECT * FROM D WHERE $A_i \in (x, y)$, a query that does not ``inherit'' the selection conditions in the user query $q$ being processed. This might appear like a waste as 1D-BASELINE could issue fewer queries with a narrower input query. Nonetheless, we note that rationale here is that a dense region might be covered by multiple user queries repeatedly. By keeping the index construction generic to all user queries, we reduce the amortized cost of indexing as the dense index can make future reranking processes more efficient.



\vspace{2mm}
\noindent{\bf Parameter Settings:} To properly set the two parameters for dense index, $c$ and $s$, we need to consider not only the query cost derived in Theorem~\ref{thm:qc}, but also the cost for building the index, which is considered in the following theorem:

\begin{theorem}\label{th:3}
The total query cost incurred by on-the-fly indexing (for processing all user queries) is at most
\begin{align}\label{eq:onTheFlyCost}
\sum^{n-s-1}_{h=1} c(h)
\end{align}
where $c(h) = 1$ if there exists $j \in [h - s, h]$, such that
\begin{align} \label{eq:densewindow}
t(*, A_i, j+s+1)[A_i] - t(*, A_i, j)[A_i] < \frac{s \cdot |V(A_i)|}{c \cdot n},
\end{align}
and 0 otherwise.  Here $t(*, A_i, j)$ refers to the $j$-th ranked tuple according to $A_i$ in the entire database.
\end{theorem}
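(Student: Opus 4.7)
The plan is to set up a charging scheme that assigns each query issued by on-the-fly indexing to a distinct position $h$ in the sorted order of tuples by $A_i$, and then show that every charged position satisfies $c(h)=1$, so that the total number of queries is at most $\sum_{h=1}^{n-s-1}c(h)$.

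First, I would observe from Algorithm~\ref{alg:1D-rerank} that ORACLE is invoked only after 1D-RERANK has narrowed its binary-search range $(t_h[A_i], t_{h+1}[A_i])$ to width strictly less than the dense threshold $s\cdot|V(A_i)|/(c\cdot n)$. When invoked on a range not already covered by the index, ORACLE runs 1D-BASELINE to enumerate tuples in that range and inserts them into the index. Because indexed ranges are maintained non-overlapping across the entire processing history, each database tuple can be discovered by on-the-fly indexing at most once, which already guarantees that each sorted position is charged at most once globally, no matter how many user queries are processed.

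Second, I would analyze the per-invocation cost of 1D-BASELINE on a range of width below the threshold. Each query it issues is either an overflow, in which case it reveals at least $k\ge 1$ previously-unseen tuples lying in the range, or it is a non-overflow query that terminates the inner loop. So, loosely charging each query to one of the fresh tuples it reveals gives an injection from queries issued by on-the-fly indexing to tuples added to the index. Summed over all ORACLE invocations across all user queries, the total query count of on-the-fly indexing is therefore at most the total number of tuples ever added to the index.

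Third, I would connect charged tuples to the indicator $c(h)$. A tuple $t$ at sorted position $h$ that gets charged lies in some range $(x,y)$ of width less than $s\cdot|V(A_i)|/(c\cdot n)$ on which ORACLE was invoked; the tuples in $(x,y)$ form a contiguous block of sorted positions $[l,r]$ with $l\le h\le r$. In the intended case $r-l\ge s+1$, any $s+2$ consecutive positions inside $[l,r]$ span less than $s\cdot|V(A_i)|/(c\cdot n)$, so one can pick $j\in[h-s,h]$ with $[j,j+s+1]\subseteq[l,r]$, which witnesses $t(*,A_i,j+s+1)[A_i]-t(*,A_i,j)[A_i]<s\cdot|V(A_i)|/(c\cdot n)$ and hence $c(h)=1$. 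Summing over all charged positions then yields the claimed bound.

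The main obstacle will be the corner case where ORACLE is invoked on a range that in hindsight contains fewer than $s+2$ tuples, i.e., a range that was narrow enough to trigger indexing but did not actually certify a dense window. For such sparse invocations the third step above does not directly furnish a $j$ witnessing $c(h)=1$. Closing this gap likely requires either an amortization argument that absorbs queries on sparse invocations into neighboring genuinely-dense ones, or a refinement of the charging that exploits the fact that each overflow query inside 1D-BASELINE delivers $k$ new tuples, so the sparse case contributes only $O(s/k)$ queries that can be absorbed into the constant-factor slack implicit in the bound.
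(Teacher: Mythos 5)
Your proposal follows essentially the same route as the paper's own proof: charge each query issued during indexing to the (at least $k$) previously-unseen tuples it returns, so that the total indexing cost is bounded by the number of tuples ever added to the index, and then observe that a tuple at sorted position $h$ lying in a window of width below the dense threshold that contains at least $s$ tuples is exactly a position with $c(h)=1$. The corner case you flag --- ORACLE being invoked on a range that is narrow enough to trigger indexing but in hindsight contains fewer than $s$ tuples --- is a genuine imprecision, and the paper's proof silently assumes it away by asserting that every tuple discovered during indexing is ``in the dense region.'' The intended resolution is the first of your two suggested fixes: by the paper's accounting (see the remark following Algorithm~\ref{alg:oracle}), the queries spent crawling a narrow-but-sparse range are not charged to on-the-fly indexing at all, because they are precisely the $O(s/k)$ queries that 1D-RERANK would have issued anyway via 1D-BASELINE when the ideal oracle returns empty; that cost is already carried by the $s/k$ term of Theorem~\ref{thm:qc}. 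With that convention your charging argument closes, and no separate amortization into neighboring dense regions is needed. One further small slack that you share with the paper: for the highest-ranked position $h=r$ of a crawled block $[l,r]$, every candidate $j\in[h-s,h]$ forces $j+s+1>r$, so the window in (\ref{eq:densewindow}) necessarily reaches past the block into possibly sparse territory and $c(r)$ may be $0$ even though that tuple is charged; this is an $O(1)$-per-dense-region discrepancy that neither your argument nor the paper's accounts for.
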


\begin{proof}
The discovery of every tuple in the dense region takes at most the amortized cost of one query. That is because 1D-BASELINE assures the discovery on $k$ unseen tuples by every non-underflowing query, i.e. every tuple in the dense region is discovered by one and only one query. Thus the query cost is at most equal to the number of tuples in the dense regions.
Each tuple $t$ is in the dense region with regard to the dimension $A_i$, if, sorting the tuples on $A_i$, we can construct a window containing $t$, with size less than the dense region threshold, that has at least $s$ tuples. Suppose $t$ is ranked $h$-th based on $A_i$. Equation~\ref{eq:densewindow} checks the existence of such a window around it. The total cost thus, is at most the number of the tuples for which this equation is true. This is reflected in Equation~\ref{eq:onTheFlyCost}.
\end{proof}


One can see from the above theorem and Theorem~\ref{thm:qc} how $c$ and $s$ impacts the query cost: the larger $c$ is, the fewer dense regions there will be, leading to a lower indexing cost. On the other hand, the per-query reranking cost increases at the log scale with $c$. Similarly, the larger $s$ is, the fewer dense regions there will be (because a larger $s$ reduces the variance of tuple density), while the per-query reranking cost increases linearly with $s$. Given the different rate of increase for the per-query reranking cost with $c$ and $s$, we should set $c$ to be a larger value to leverage its log-scale effect, while keep $s$ small to maintain an efficient reranking process.

Specifically, we set $c = n$ and $s = k \cdot \log n$. One can see that the per-query reranking cost of 1D-RERANK in this case is $O(\log n)$. While the indexing cost depends on the specific data distribution (after all, we are bounded by Theorem~\ref{thm:ner} in terms of worst-case performance), the large value of $c = n$ makes it extremely unlikely for the indexing cost to be high. 
In particular, note that even if the density surrounding each tuple follows a heavy-tailed scale-free distribution, the setting of $c = n$ still makes the number of dense regions, therefore the query cost for indexing, a constant.
We shall verify this intuition and perform a comprehensive test of different parameter settings in the experimental evaluations.

\section{MD-RERANK} \label{sec4}

In this section, we consider the generic query reranking problem, i.e., over any monotonic user-specified ranking function.  We start by pointing out the problem of a seemingly simple solution: implementing a classic top-$k$ query processing algorithm such as TA \cite{fagin2003} by calling 1D-RERANK as a subroutine.  The problem illustrates the necessity of properly leveraging the conjunctive queries supported by the search interface of the underlying database.  To do so, we start with the design of MD-BASELINE, a baseline technique similar to 1D-BASELINE. Despite of the similarity, we shall point out a key difference between two cases: MD-BASELINE requires many more queries because of the more complex shape of what we refer to as a tuple's ``rank-contour'' - i.e., the subspace (e.g., a line in 2D space) containing all possible tuples that have the same user-defined ranking score as a given tuple $t$. To reduce this high query cost, we propose Algorithm MD-BINARY which features two main ideas, {\em direct domination detection} and {\em virtual tuple pruning}. Finally, we integrate the dense-region indexing idea with MD-BINARY to produce our final MD-RERANK algorithm.

\subsection{Problem with TA over 1D-RERANK}

\begin{figure}
\center
\includegraphics[width=0.35\textwidth]{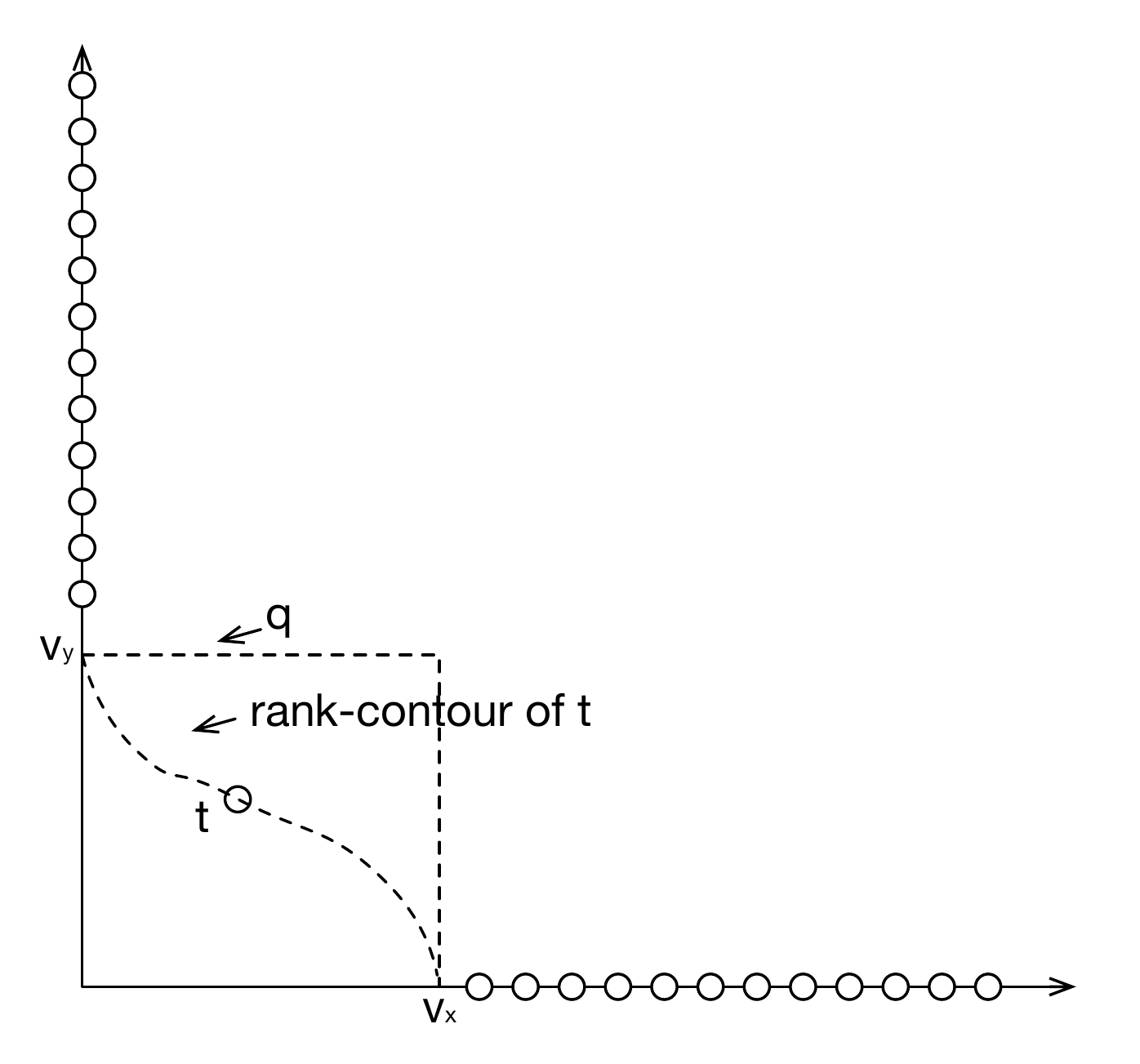}
\caption{Illustration of problem with TA over 1D-RERANK}
\label{fig:1DN}
\end{figure}


A seemingly simple solution to solve the generic query reranking problem is to directly apply a classic top-$k$ query processing algorithm, e.g., the threshold (TA) algorithm \cite{fagin2003}, over the Get-Next primitive offered by 1D-RERANK.  While we refer readers to \cite{fagin2003} for the detailed design of TA, it is easy to see that 1D-RERANK offers all the data structure required by TA, i.e., a sorted access to each attribute.  Note that the random access requirement does not apply here because, as discussed in the preliminary section, the search interface returns all attribute values of a tuple without the need for accessing each attribute separately.  Since TA supports all monotonic ranking functions, this simple combination solves the generic query reranking problem defined in \S~\ref{sec2}.

While simple, this solution suffers from a major efficiency problem, mainly because it does not leverage the full power provided by client-server databases. Note that, by exclusively calling 1D-RERANK as a subroutine, this solution focuses on just one attribute at a time and does not issue any multi-predicate (conjunctive) queries supported by the underlying database (unless such predicates are copied from the user query). The example in Figure~\ref{fig:1DN} illustrates the problem: In the example, there is a large number of tuples with extreme values on both attributes (i.e., tuples on the $x$- and $y$-axis). Since this TA-based solution focuses on one attribute at a time, these extreme-value tuples have to be enumerated first {\em even when} the system ranking function completely aligns (e.g., equals) the user-desired ranking function. In other words, no matter what the system/user ranking function is, discovering the top-1 reranked tuple requires sifting through at least half of the database in this example.

On the other hand, one can observe from the figure the power bestowed by the ability to issue multi-predicate conjunctive queries. As an example, consider the case where the system ranking function is well-conditioned and returns $t$ as the result for SELECT * FROM D.  Given $t$, we can compute its rank-contour, i.e., the line/curve that passes through all 2D points with user-defined score equal to $\mathcal{S}(t)$, i.e., the score of $t$. The curve in the figure depicts an example. Given the rank-contour, we can issue the smallest 2D query encompassing the contour, e.g., $q$ in Figure~\ref{fig:1DN}, and immediately conclude that $t$ is the No.~1 tuple when $q$ returns $t$ and nothing else (assuming $k > 1$). This represents a significant saving from the query cost of implementing TA over 1D-RERANK.

\subsection{MD-Baseline}

\subsubsection{Discovery of Top-1}

To leverage the power of multi-predicate queries, we start with developing a baseline algorithm similar to 1D-BASELINE.  The algorithm starts with discovering the top-1 tuple $t$ according to an arbitrary attribute, say $A_1$.  Then, we compute the rank-contour of $t$ (according to the user ranking function, of course), specifically the values where $t$'s rank-contour intersects with each dimension, i.e., 
\begin{align}
\ell(A_i) = \max\{v \in V(A_i) | \mathcal{S}(t) \leq \mathcal{S}(0, \ldots, 0, v, 0, \ldots, 0)\}.
\end{align}
Figure~\ref{fig:2DE} depicts an example of $\ell(A_i)$ for the two dimensions, computed according to $t$.

We now issue $m$ queries of the form
\begin{align}
q_1: \mbox{  } &A_1 < t[A_1] \mbox{ \& } A_2 < \ell(A_2) \mbox{ \& } \cdots \mbox{ \& } A_m < \ell(A_m) \nonumber\\
q_2: \mbox{  } &A_1 \in [t[A_1], \ell(A_1)) \mbox{ \& } A_2 < t[A_2] \mbox{ \& } A_3 < \ell(A_3) \mbox{ \& } \cdots \nonumber\\
& \mbox{ \& } A_m < \ell(A_m) \nonumber\\
q_m: \mbox{  } &A_1 \in [t[A_1], \ell(A_1)) \mbox{ \& } \cdots \mbox{ \& } A_{m-1} \in [t[A_{m-1}, \nonumber\\
& \ell(A_{m-1})) \mbox{ \& } A_m < t[A_m] \label{mdqueries}
\end{align} 
Again, Figure~\ref{fig:2DE} shows an example of $q_1$ and $q_2$ for the 2D space.

\begin{figure}
\center
\includegraphics[width=0.35\textwidth]{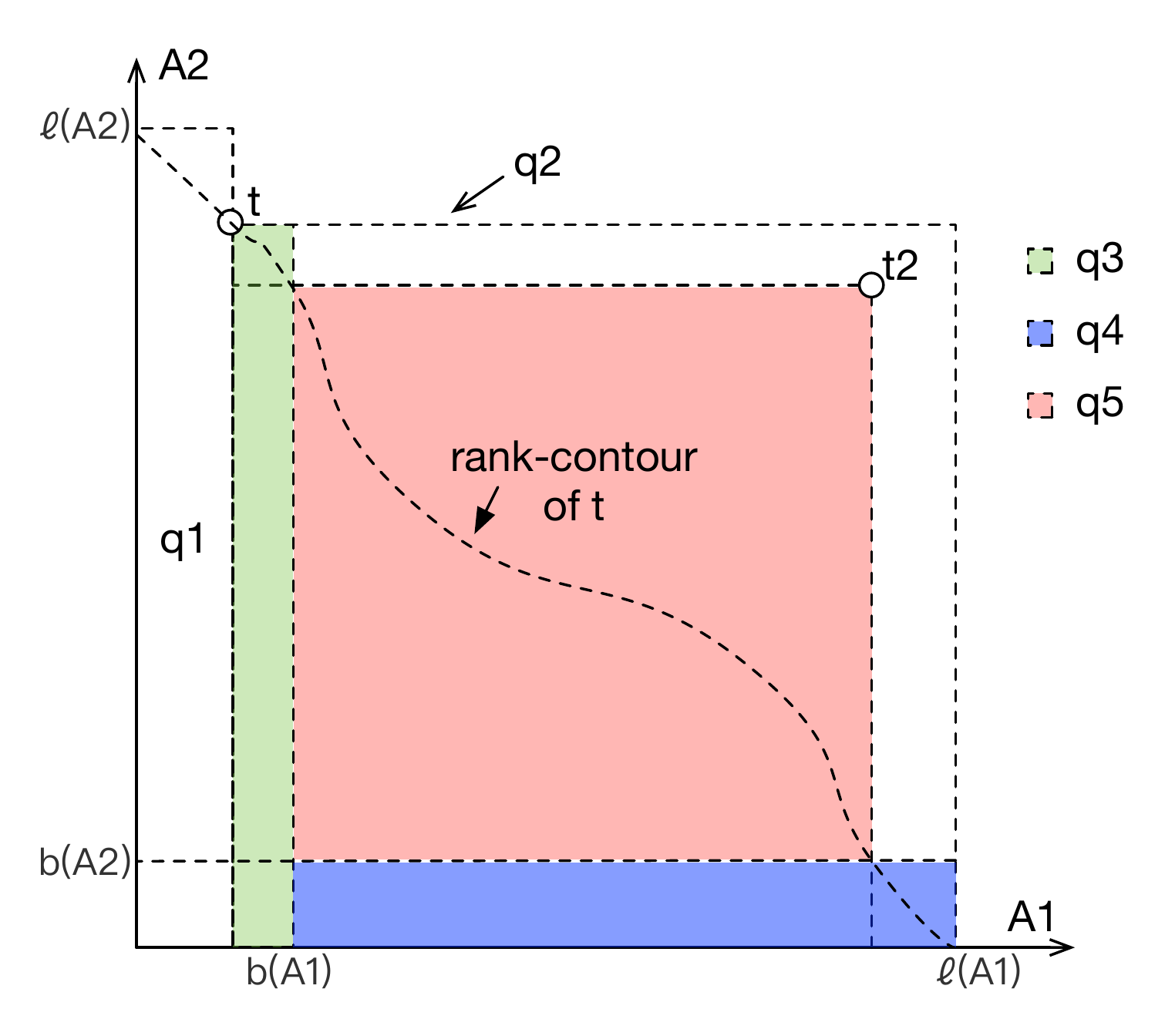}
\caption{Example of MD-BASELINE}
\label{fig:2DE}
\end{figure}

One can see that the union of these $m$ (mutually exclusive) queries covers in its entirety the region ``underneath'' the rank-contour of $t$. Thus, if none of them overflows, we can safely conclude that the No.~1 tuple must be either $t$ or one of the tuples returned by one of the $m$ queries. If at least one query overflows and returns $t^\prime$ with score $\mathcal{S}(t^\prime) < \mathcal{S}(t)$, i.e., $t^\prime$ that ranks higher than $t$, we restart the entire process with $t = t^\prime$.

Otherwise, for each query $q_i$ that overflows, we ``partition'' it further into $m + 1$ queries.  Let $t_i$ be the tuple returned by $q_i$.  We compute for each attribute $A_j$ a value $b(A_j)$ such that
\begin{align}
b(A_j) = \min\{v \in V(A_j) | \mathcal{S}(t) \leq \nonumber\\
\mathcal{S}(t_i[A_1], \ldots, t_i[A_{j-1}], b(A_j), t_i[A_{j+1}], \ldots, t_i[A_m])\}.
\end{align}
Intuitively, $b(A_j)$ can be understood as follows: In order for a tuple $t^\prime$ to outrank $t$, the highest-ranked tuple discovered so far, it must either ``outperform'' $b(A_j)$ on at least one attribute, i.e., $\exists A_j$ with $t^\prime[A_j] < b(A_j)$, or it must dominate $t_i$. Examples of $b(A_1)$ and $b(A_2)$ are shown in Figure~\ref{fig:2DE}.

Note that, while any monotonic (user-defined) ranking function yields a unique solution for $b(A_j)$, the complexity of computing it can vary significantly depending upon the design of the ranking function.  Nonetheless, recall from \S~\ref{sec2} that our main efficiency concern is on the {\em query cost} of the reranking process rather than the computational cost for solving $b(A_j)$ locally (which does not incur any additional query to the underlying database). Furthermore, the most extensively studied ranking function in the literature, a linear combination of multiple attributes, features a constant-time solver for $b(A_j)$.

Given $b(A_j)$, we are now ready to construct the $m + 1$ queries we issue. The first $m$ queries $q_{i1}, \ldots, q_{im}$ cover those tuples outperforming $b(A_1), \ldots, b(A_m)$ on $A_1, \ldots, A_m$, respectively; while the last one covers those tuples dominating $t_i$. Specifically, $q_{ij}$ ($j \in [1, m]$) is the AND of $q_i$ and
\begin{align}
(A_1 \geq b(A_1)) \mbox{ AND } \cdots \mbox{ AND } (A_{j-1} \geq b(A_{j-1})) \nonumber\\
\mbox{ AND } (A_j < b(A_j))
\end{align}
The last query is the AND of $q_i$ and $A_1 \leq t_i[A_1]$ AND $\cdots$ AND $A_m \leq t_i[A_m]$, i.e., covering the space dominating $t_i$.

Once again, at anytime during the process if a query returns $t^\prime$ with $\mathcal{S}(t^\prime) < \mathcal{S}(t)$, we restart the entire process with $t = t^\prime$. Otherwise, for each query that overflows, we ``partition'' it into $m + 1$ queries as described above.

In terms of query cost, recall from \S~\ref{sec2} our idea of leveraging the query history by checking if any previously discovered tuples match the query we are about to issue.  Given the idea, each tuple will be retrieved at most once by MD-BASELINE. Since each tuple we discover triggers at most $m + 1$ queries which are mutually exclusive with each other, one can see that the worst-case query cost of MD-BASELINE for discovering the top-1 tuple is $O(m \cdot n)$.

\subsubsection{Discovery of Top-$k$}

We now discuss how to discover the top-$k$ ($k > 1$) tuples satisfying a given query. To start, consider the discovery of No.~2 tuple after finding the top-1 tuple $t_1$.  What we can do is to pick an arbitrary attribute, say $A_1$, and partition the search space into two parts: $A_1 < t_1[A_1]$ and $A_1 > t_1[A_1]$.  Then, we launch the top-1 discovery algorithm on each subspace.  Note that during the discovery, we can reuse the historic query answers - e.g., by starting from the tuple(s) we have also retrieved in each subspace that have the smallest $\mathcal{S}(\cdot)$. One can see that one of the two discovered top-1s must be the actual No.~2 tuple $t_2$ of the entire space.

Once $t_2$ is discovered, in order to discover the No.~3 tuple, we only need to further split the subspace from which we just discovered $t_2$ (into two parts).  For example, if we discovered $t_2$ from $A_1 > t_1[A_1]$, then we can split it again into $A_1 \in (t_1[A_1], t_2[A_1])$ and $A_2 > t_2[A_1]$. One can see that the No.~3 tuple must be either the top-1 of one of the two parts or the top-1 of $A_1 < t_1[A_1]$, which we have already discovered. As such, the discovery of each tuple in top-$k$, say No.~$h$, requires launching the top-1 discovery algorithm {\em exactly twice}, over the two newly split subspaces of the subspace from which the No.~$h-1$ tuple was discovered. Thus, the worst-case query cost for MD-BASELINE to discover all top-$k$ tuples is $O(m \cdot n \cdot k)$.

\subsection{MD-Binary}

\subsubsection{Problem of MD-Baseline}

A main problem of MD-Baseline is its poor performance when the system ranking function is {\em negatively correlated} to the user-desired ranking function. To understand why, consider how MD-Baseline compared with the 1D-Baseline algorithm discussed in \S~\ref{sec3}. Both algorithms are iterative in nature; and the objectives for each iteration are almost identical in both algorithms: once a tuple $t$ is discovered, find another tuple $t^\prime$ that outranks it according to the input ranking function. The difference, however, is that while it is easy to construct in 1D-Baseline a query that covers only those tuples which outranks $t$ (for the attribute under consideration), doing so in the MD case is impossible.

\begin{figure}
\center
\includegraphics[width=0.35\textwidth]{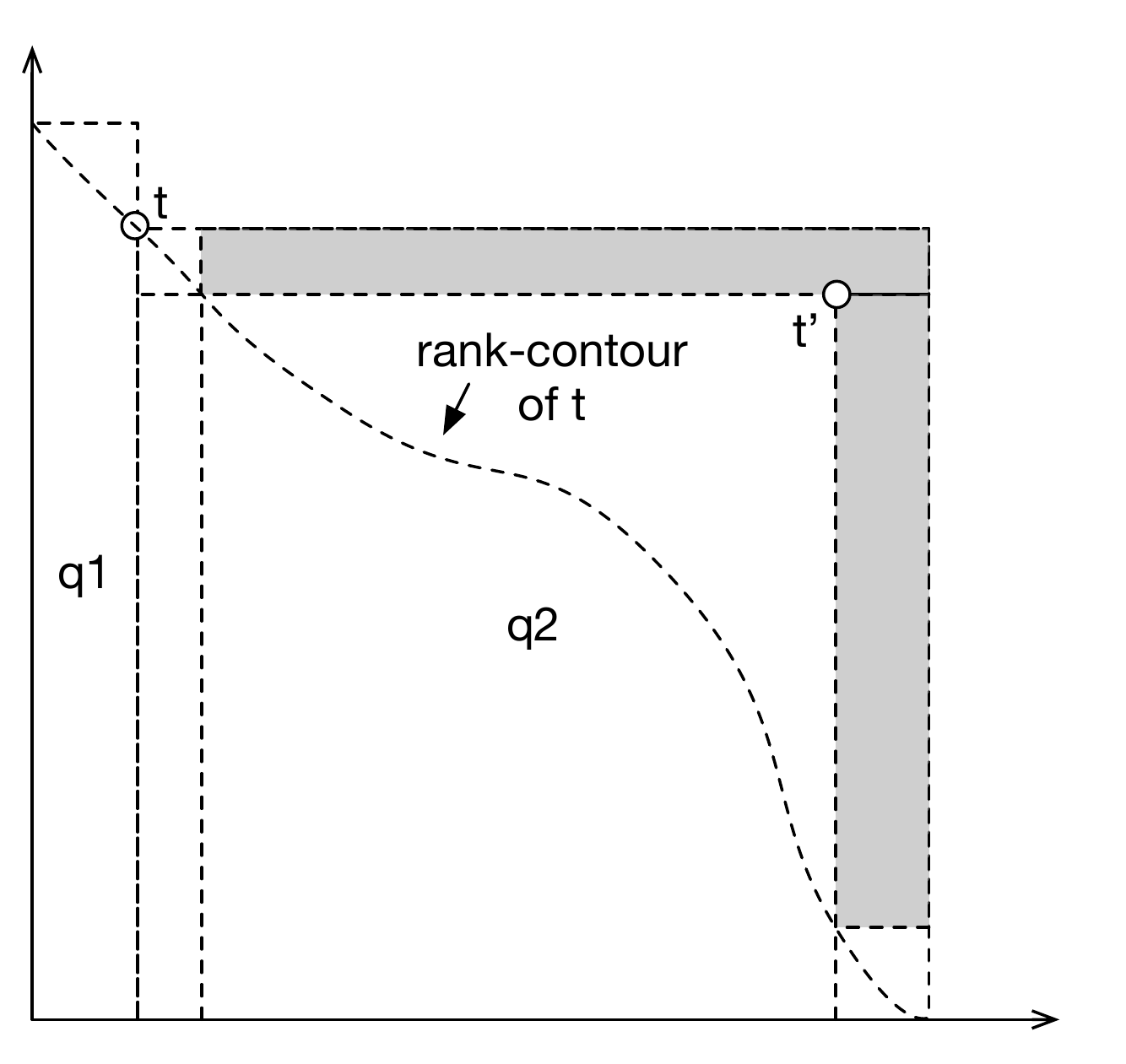}
\caption{Illustration of problem with MD-Baseline}
\label{fig:2D1}
\end{figure}

The reason for this difference is straightforward: observe from Figure~\ref{fig:2D1} that, when there are more than one, say two, attributes, the subspace of tuples outranking $t$ is roughly ``triangular'' in shape. On the other hand, only ``rectangular'' queries are supported by the database. This forces us to issue at least $m$ queries to ``cover'' the subspace outranking $t$ (without covering, and returning, $t$ itself).

The problem for this ``coverage'' strategy in MD-Baseline, however, is that the rectangular queries it issues may match many tuples that indeed rank lower (i.e., have larger $\mathcal{S}(\cdot)$)) than $t$ according to the desired ranking function. For example, half of the space covered by $q_2$ in Figure~\ref{fig:2D1} is occupied by tuples that rank lower than $t$. This means that, when the system ranking function is negatively correlated with our desired one, queries like $q_2$ in Figure~\ref{fig:2D1} are most likely going to return tuples that rank lower than $t$. This outcome has two important ramifications on the efficiency of MD-Baseline: First, it significantly slows down the process of iteratively finding a tuple that outranks the previous one. Second, within each iteration, it slows down the pruning of the search space. For example, observe from Figure~\ref{fig:2D1} that, after $q_2$ returns $t^\prime$, the pruning effect on the space covered by $q_2$ is minimal, i.e., only the dark subspace on the top-right corner of $q_2$.

\subsubsection{Design of MD-Binary}

We propose two ideas in MD-Binary to address the two ramifications of MD-Baseline, respectively:

\vspace{1mm}
\noindent {\bf Direct Domination Detection:} The intuition of this idea can be stated as follows: When a query such as $q_2$ returns a tuple $t^\prime$ that ranks lower than $t$, we attempt to ``test'' whether this is indeed caused by the absence of higher-ranked tuples in $q_2$, or by the ill-conditioned nature of the system ranking function. As discussed above, there is no way to efficiently cover the subspace of tuples outranking $t$. Thus, what we do here is to find the single query which (1) is a subquery of $q_2$, (2) only covers the subspace outranking $t$, and (3) has the maximum volume among all queries that satisfy (1) and (2).

\begin{figure}
\center
\includegraphics[width=0.35\textwidth]{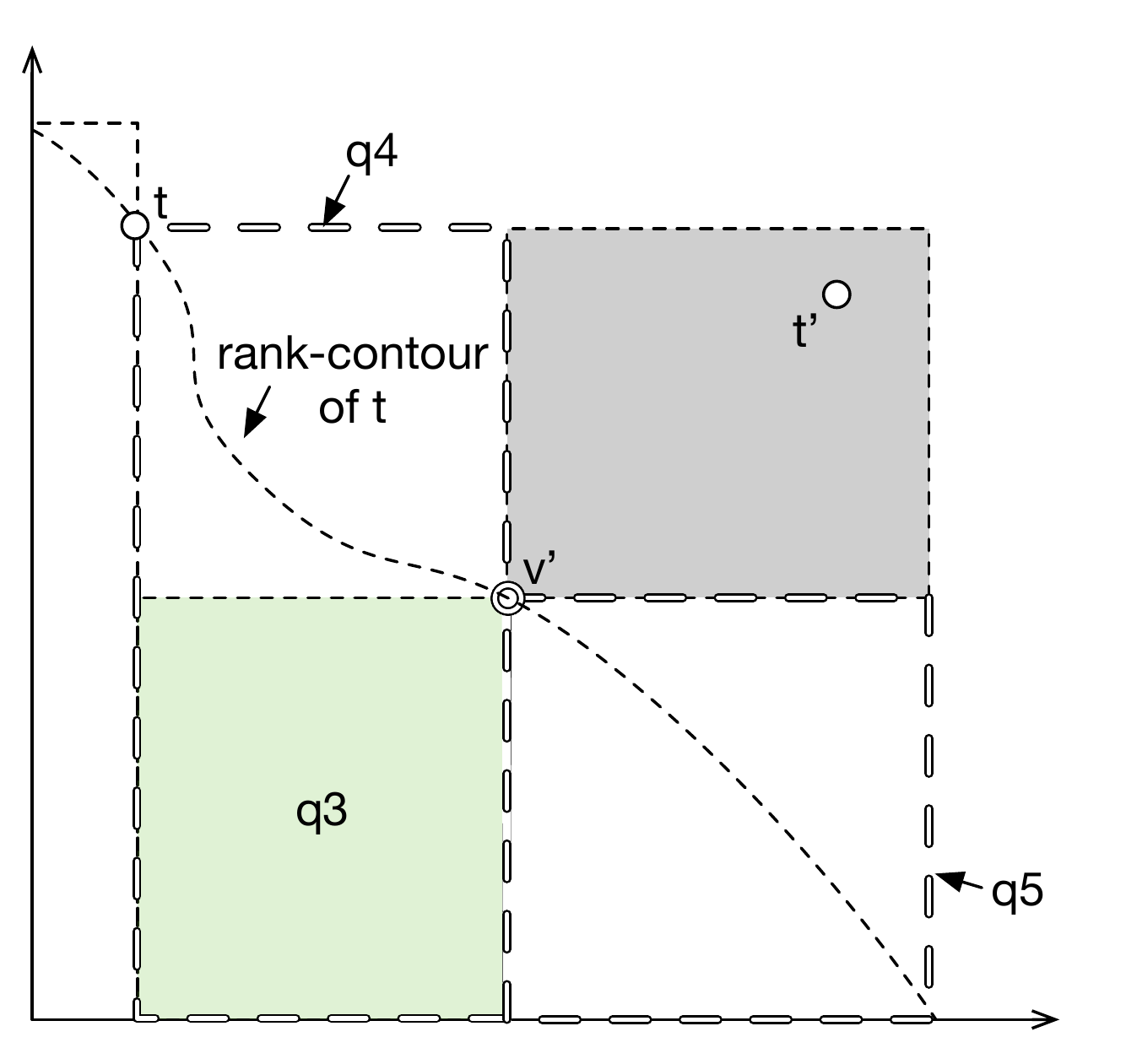}
\caption{Design of MD-Binary: Example 1}
\label{fig:2D2}
\end{figure}

For example, when $q_2$ in Figure~\ref{fig:2D1} returns $t^\prime$, we issue $q_3$ (marked in green) in Figure~\ref{fig:2D2} which covers roughly half of the ``triangular'' subspace underneath the rank-contour of $t$ in $q_2$. As another example, if $q_1$ in Figure~\ref{fig:2DE} returns a tuple with lower rank than $t$, then we the max-volume tuple would be $q_7$ in Figure~\ref{fig:2D3}, which covers almost all of the subspace outranking $t$ in $q_1$. One can see from these examples that, if the returning of $t^\prime$ is caused by the ill-conditioned system ranking function while there are abundant tuples outranking $t$, then $q_3$ and/or $q_7$ are likely to return such a tuple and successfully push MD-Binary to the next iteration. If, on the other hand, $q_3$ returns empty, we use the next idea to further partition $q_2$, in order to determine whether there is any tuple in it that outranks $t$.

\vspace{1mm}
\noindent {\bf Virtual Tuple Pruning:} We now address the second problem of MD-Baseline, i.e., the lack of pruning power when the system ranking function is negatively correlated with the desired one. To this end, our idea is to prune the search space according to {\em not} the returned tuple, but a {\em virtual tuple} created for the purpose of minimizing the pruned subspace. Figure~\ref{fig:2D2} illustrates an example: Instead of partitioning $q_2$ with $t^\prime$ like in Figure~\ref{fig:2D1} which results in minimal pruning, we ``create'' a virtual tuple $v^\prime$ which maximizes the reduction of search space as marked in gray in Figure~\ref{fig:2D2}.

\begin{figure}
\center
\includegraphics[width=0.35\textwidth]{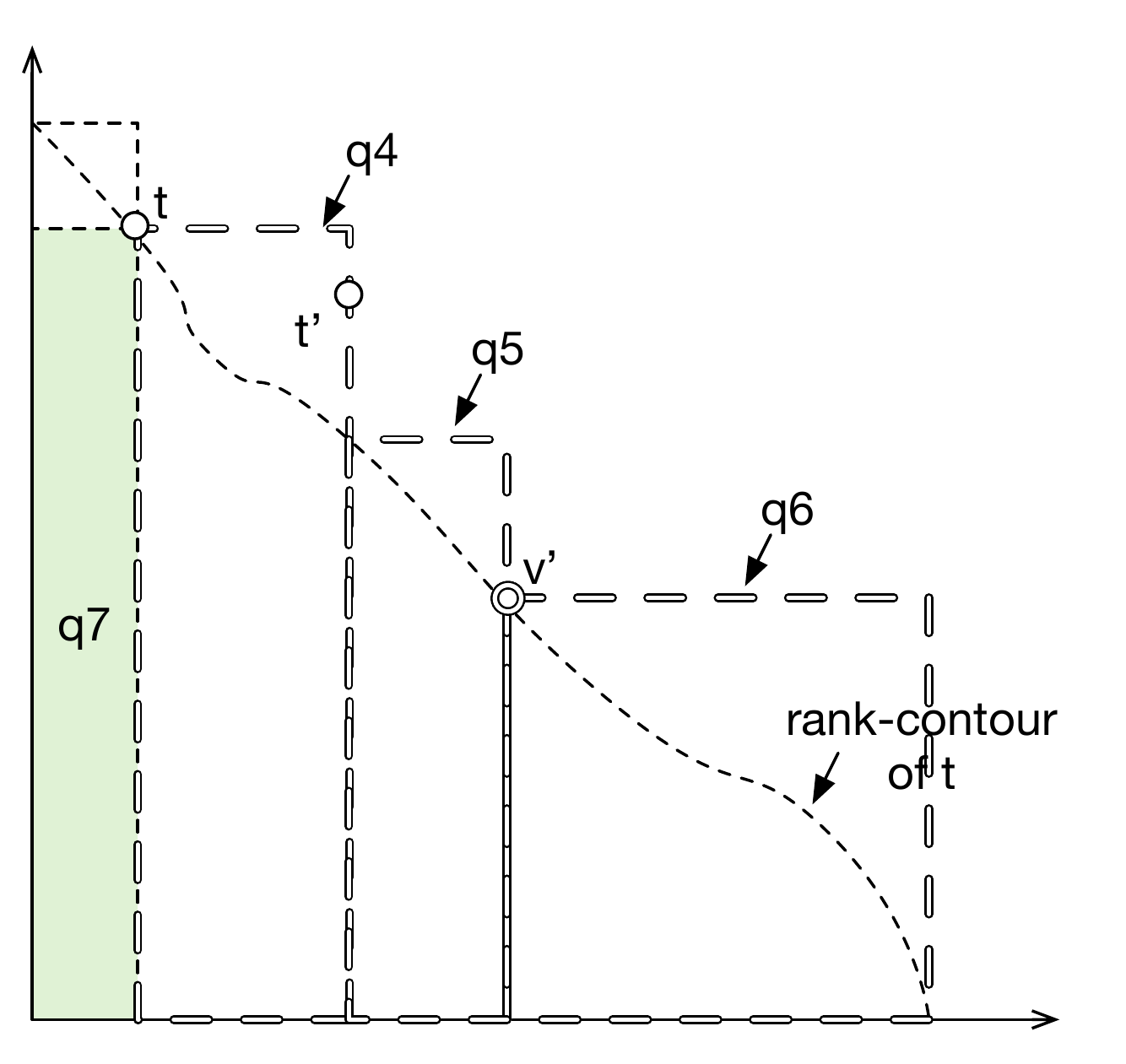}
\caption{Design of MD-Binary: Example 2}
\label{fig:2D3}
\end{figure}

Figure~\ref{fig:2D2} represents one possible outcome of virtual tuple pruning, when $v^\prime$ happens to dominate the tuple $t^\prime$ returned by $q_2$.  The other possible outcome is depicted in Figure~\ref{fig:2D3}, where $v^\prime$ does not dominate $t^\prime$. In this case, if we still split $q_2$ as in Figure~\ref{fig:2D2}, then one of the subspace (i.e., $x \in (t[x], v^\prime[x])$ AND $y < t[y]$) would return $t^\prime$, making the query answer useless. As such, we split $q_2$ into three pieces in this scenario, as shown in Figure~\ref{fig:2D3}.

The more general design of virtual tuple pruning for an $m$-D database is shown in Algorithm~\ref{alg:mdbinary}. The algorithm also depicts the direct domination detection idea.  Note from the algorithm that, depending on the values of $t^\prime$ and $v^\prime$ on the $m$ attributes, the number of split subspaces can range from $m$, when $v^\prime$ dominates $t^\prime$, to $2m - 1$, when $t^\prime$ dominates $v^\prime$ on all but one attribute.

\begin{algorithm}[!htb]
\caption{{\bf MD-BINARY}}
\begin{algorithmic}[1]
\label{alg:mdbinary}
\STATE apply 1D-RERANK on $A_1$ to $t$ and set threshold=$s(t)$
\STATE add the queries in Equation~\ref{mdqueries} to the empty queue \label{line2}
\STATE {\bf while} queue is not empty
    \STATE \hindent $q\prime$=queue.delete
    \STATE \hindent $T$ = Top-$k$($q\prime$);$t$ = argmin$_{s(t)}\{t\in T\}$
    \STATE \hindent {\bf if} $s(t)<$threshold
    \STATE \hindent \hindent threshold=$s(t)$; {\bf goto} Line~\ref{line2}
    \STATE \hindent {\bf if} $T$ is valid: {\bf continue}
    \STATE \hindent $v^\prime=$ argmax$_{vol(v)}\{v\in $ contour$(t)\}$
    \STATE \hindent $T$ = Top-$k$($\forall A \in \mathcal{A}$, $A\leq v^\prime[A])$
    \STATE \hindent {\bf if} $T$ is not underflow
    \STATE \hindent \hindent $t$=argmin$_{s(t)}\{t\in T\}$; threshold=$s(t)$; {\bf goto} Line~\ref{line2} 
    \STATE \hindent {\bf for each} $A_i\in \mathcal{A}$
    \STATE \hindent \hindent {\bf if} $t[A_i]\geq v^\prime [A_i]$ add the following query to the queue
    \begin{align}
    q_1: \mbox{  } q\prime \mbox{ AND } A_i<v^\prime[A_i] \mbox{ AND } \{\forall_{j=1} ^{i-1} A_j>=v^\prime[A_j]\} \nonumber
	\end{align}
    \STATE \hindent \hindent {\bf else} add the following queries to the queue
    \begin{align}
	q_1: \mbox{  } q\prime \mbox{ AND } A_i<t[A_i] \mbox{ AND } \{\forall_{j=1} ^{i-1} A_j>=v^\prime[A_j]\} \nonumber \\
	q_2: \mbox{  } q\prime \mbox{ AND } A_i<v^\prime[A_i] \mbox{ AND } A_{i+1}<b_t(A_{i+1}) \nonumber \\
	 \mbox{ AND } \{\forall_{j=1} ^{i-1} A_j>=v^\prime[A_j]\} \nonumber
	\end{align}
\STATE {\bf return} $t$
\end{algorithmic}
\end{algorithm}

One can see from the design that virtual tuple pruning does not affect the correctness of the algorithm: so long as $\mathcal{S}(v^\prime) \geq \mathcal{S}(t)$, the union of the split subspaces still cover $q_2$. On the other hand, the benefit of the idea can be readily observed from Figure~\ref{fig:2D2}: instead of having only a small reduction of the search space like in Figure~\ref{fig:2D1}, now we can prune half of the space in $q_2$ that rank below $t$ (in this 2D case, of course).  The experimental results in \S~\ref{sec:experiments} demonstrate the effectiveness of virtual tuple pruning.

\subsection{MD-RERANK}

Just like the 1D case, the query cost of MD-Binary may increase significantly when there is a dense cluster of tuples right above the rank-contour of the top-1 tuple. In this case, the split in MD-Binary may have to continue for a large number of times before all tuples in the cluster are excluded from the search space.  Once again, our solution to this problem is index-based reranking. Like in the 1D case, we proactively {\em record as an index} densely located tuples once we encounter them, so that we do not need to incur a high query cost every time a query $q$ triggers visits to the same dense region.

More specifically, MD-RERANK follows MD-Binary until a remaining search space (1) is covered by an already crawled region in the index; or (2) has volume smaller than $|V| \cdot (s/n) / c$, where $|V|$ is the volume of the entire data space, and $s$ and $c$ are the same as in 1D. In the earlier case, since the search space has been crawled already, we can directly reuse the crawled tuples. In the latter case, we follow the same procedure as in 1D-RERANK, i.e., we crawl the space and, if it indeed turns out to be dense (by containing at least $s$ tuples), we include the crawled tuples into the index.
Algorithm~\ref{alg:rerankmd} depicts the pseduocode of MD-RERANK.

\begin{algorithm}[!htb]
\caption{{\bf MD-RERANK}}
\begin{algorithmic}[1]
\label{alg:rerankmd}
\STATE follow MD-BINARY
\STATE during the process for each query $q\prime$:
\STATE \hindent {\bf if} $V(q\prime) < |V| \cdot (s/n) / c$
\STATE \hindent \hindent $q\prime$ = remove $Sel(q)\}$ from $q\prime$
\STATE \hindent \hindent {\bf if} ORACLE($q\prime$) exists
\STATE \hindent \hindent \hindent {\bf return} argmin$_{s(t)}\{t\in D(q\prime))|$ $t$ matches $Sel(q)\}$
\STATE \hindent \hindent $t$=MD-BASELINE($q\prime$); add $t$ to temp
\STATE \hindent \hindent {\bf while} $t$ does not satisfy $Sel(q)$
\STATE \hindent \hindent \hindent $t_1$ = MD-BASELINE($q\prime$ AND $A_1<t[A_1]$)
\STATE \hindent \hindent \hindent $t_2$ = MD-BASELINE($q\prime$ AND $A_1>t[A_1]$)
\STATE \hindent \hindent \hindent $t$=min($t_1$ , $t_2$); add $t_1$ and $t_2$ to temp
\STATE \hindent \hindent add temp to $D(q\prime)$
\end{algorithmic}
\end{algorithm}

\section{discussions} \label{sec5}

\noindent{\bf General Positioning Assumption:} In previous discussions, we made the general positioning assumption, i.e., each tuple has a unique value on each attribute, for the simplicity of discussions.  We now consider the removal of this assumption.   Note that the removal of this assumption for MD-RERANK is extremely simple: the only tuple(s) that can be missed by MD-RERANK are those that have the exact same value on {\em every single attribute}.  Thus, the only post-processing step required for removing the assumption is to form a fully specified query according to No.~$h$ tuple just discovered.  If more than one, say $i$, tuples are returned, they become the No.~$h$ to No.~$(h + i - 1)$ top-ranked tuples. Removing the assumption for 1D-RERANK is slightly more complex.  For example, if we are running it over attribute $A_1$, the removal of the general positioning assumption means query  SELECT * FROM D WHERE $A_1 = t[A_1]$ might overflow. In this case, our solution is to call the crawling algorithm \cite{sheng2012optimal} to discover, one at a time, tuples satisfying $A_1 = t[A_1]$, as all of these tuples have the same rank for the purpose of 1D-RERANK.



\vspace{2mm}
\noindent{\bf Multiple/Known System Ranking Functions:} Another interesting issue arising in practice is when the client-server database offers more than one ranking functions, often times allowing ranking over a specific attribute. For example, Amazon.com offers not only a proprietary rank by ``popularity'', the design of which is unknown, but also ranking by price, which is an attribute usually involved in the user-specified ranking function. An interesting implication of such a ``public'' ranking function is that it might boost the performance of the TA-1D algorithm discussed in the beginning of \S~\ref{sec4}. Specifically, since now TA can simply use the public ranking function on the attribute instead of calling 1D-RERANK, it may have a even lower query cost than MD-RERANK when the user-desired ranking function aligns well with the system one.

\vspace{2mm}
\noindent{\bf Point Predicates:} In this paper, we focused on cases where attributes involved in the ranking function are numeric attributes that support range queries.  While this is often the case in practice (as evidenced in real-world websites such as the aforementioned Blue Nile where all attributes such as price, carat, clarity, etc., are available as range predicates), there are also cases where a ranking attribute with only a small number of domain values can only be specified as a point predicate (i.e., of the form $A_i = v$) in the database search interface.  For 1D-RERANK, this is often a blessing because it simplifies the task to querying the attribute values in the preference order (plus the crawling-based provision as in the discussion for the general positioning assumption). On the other hand, it makes MD-RERANK much more costly, because now a conjunctive query covers a much smaller space than the range case. Thus, an intuition here is to prefer the TA-1D algorithm over MD-RERANK when a large number of attributes are searchable as point predicates only.  Due to space limitations, we leave a comprehensive study of this issue to future work.

\section{Experimental Evaluation}
\label{sec:experiments}

\subsection{Experimental Setup}

In this section, we present our experimental results over a number of several real-world datasets, offline and online. We started with the offline case by testing over a real-world dataset we have already collected. Specifically, we constructed a top-$k$ web search interface over it, and then executed our algorithms through the interface.  This offline setting enabled us to not only verify the correctness of our algorithms, but also investigate how the performance of query reranking changes with various factors such as the database size, the system ranking function, settings of the system search interface, etc. We followed the offline tests with online, live, experiments over two real-world web databases, including the largest online diamond retailer and a popular auto search website. In all these experiments, we applied the extensions described in \S~\ref{sec5} to resolve the general positioning assumption which may not hold in practice.

\noindent {\bf Offline Dataset:} We used the flight on-time dataset published by the US Department of Transportation (DOT)\footnote{downloaded from \url{http://www.transtats.bts.gov/DL_SelectFields.asp?Table_ID=236&DB_Short_Name=On-Time}}. A wide range of third-party websites use this dataset to identify on-time performance of flights, routes, airports, airlines, etc. It consists of 457,013 flight records of 14 US carriers during the month of May 2015. It has 28 attributes, out of which we selected the following 8 attributes for ranking: \textit{Dep-Delay, Taxi-Out, Taxi-In, Arr-Delay-New, CRS-Elapsed-Time, Actual-Elapsed-Time, Air-Time,} and \textit{Distance}.  The domain sizes are $1988$, $180$, $180$, $1971$, $718$, $724$, $676$, and $5000$, respectively.  For the purpose of the experiments, we considered two system ranking functions: {\it 0.3 AIR-TIME + TAXI-IN} (SR1) and {\it -0.1 DISTANCE - DEP-DELAY} (SR2). In general, SR1 has a positive correlation with the user-specified ranking functions we tested, while SR2 has a negative one. We set SR1 as the default ranking function in the experiments. The value of $k$ offered by the database is set to 10 by default.

\noindent {\bf Online Experiments:}  We conducted live experiments over two real-world web-sites: 
{\it Blue Nile} (BN) and {\it Yahoo! Autos} (YA).

\noindent {\it Blue Nile}\footnote{\url{http://www.bluenile.com/diamond-search}} is the largest diamonds online retailer in the world. At the time of our experiments, its catalog had 117,641 diamonds. We considered {\it Carat, Depth, LengthWidthRatio, Price,} and {\it Table} as the ranking attributes, and {\it Clarity, Color, Cut, Fluorescence, Polish, Shape,} and {\it Symmetry} for filtering. The domains for the ranking attributes are [0.23,22.74], [0.45,0.86], [0.49,0.89], [\$220,\$4506938] and [0.75,2.75], respectively. 
BN allows multiple ranking functions - ordering based on each attributes individually as well as by the derived attribute {\it price-per-carat}.

\noindent {\it Yahoo! Autos} is a popular website for buying used cars\footnote{\url{https://autos.yahoo.com/used-cars/}}. We considered the 13,169 cars listed for sale within 30 miles of New York city.  We treated {\it Price, Milage} and {\it Year} as the ranking attributes, and {\it BodyStyle, DriveType, Transmission, Name} and {\it Model} as the filtering attributes. The cars had a price range between \$0 and \$50,000, mileage between  0 and 300,000, and were manufactured between 1993 and 2016.  The default ranking function is ``distance from a predefined location'' (which is not monotonic). Additionally, it supports ranking by each of the numerical attributes individually.

\noindent {\bf Performance Measures}: As explained in \S~\ref{sec2}, our algorithms always return the precise query answer.  After verifying the correctness in all offline experiments, we turn our attention to the key performance measure, efficiency, which is measured by the number of queries issued to the web database.

\begin{figure*}[ht]
    \begin{minipage}[t]{0.23\linewidth}
        \centering
        \includegraphics[scale=0.26]{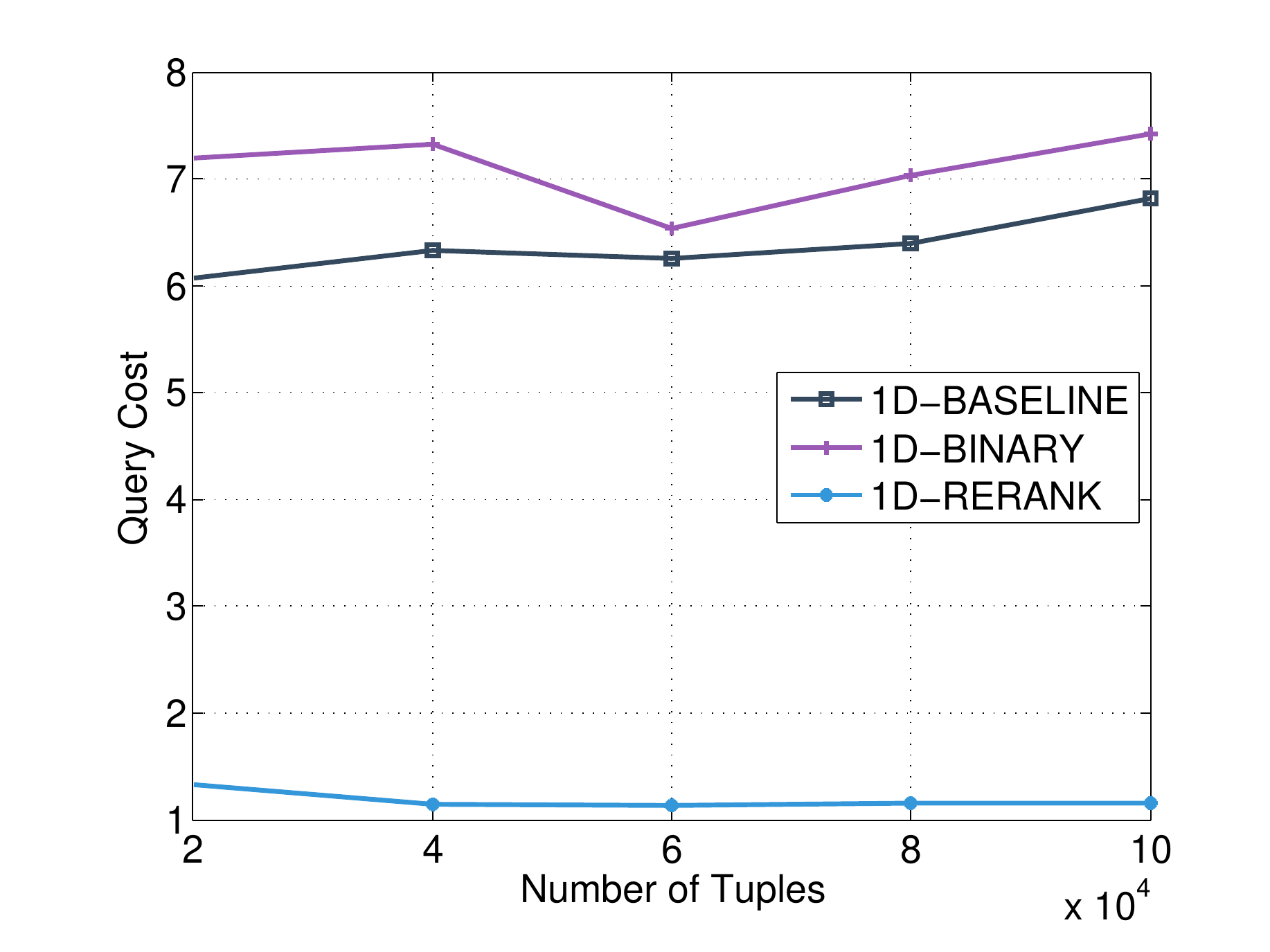}
		\vspace{-2mm}\caption{1D: Impact of $n$ (SR1)}
        \label{fig:1dn1}
    \end{minipage}
    \hspace{1mm}
    \begin{minipage}[t]{0.23\linewidth}
        \centering
        \includegraphics[scale=0.26]{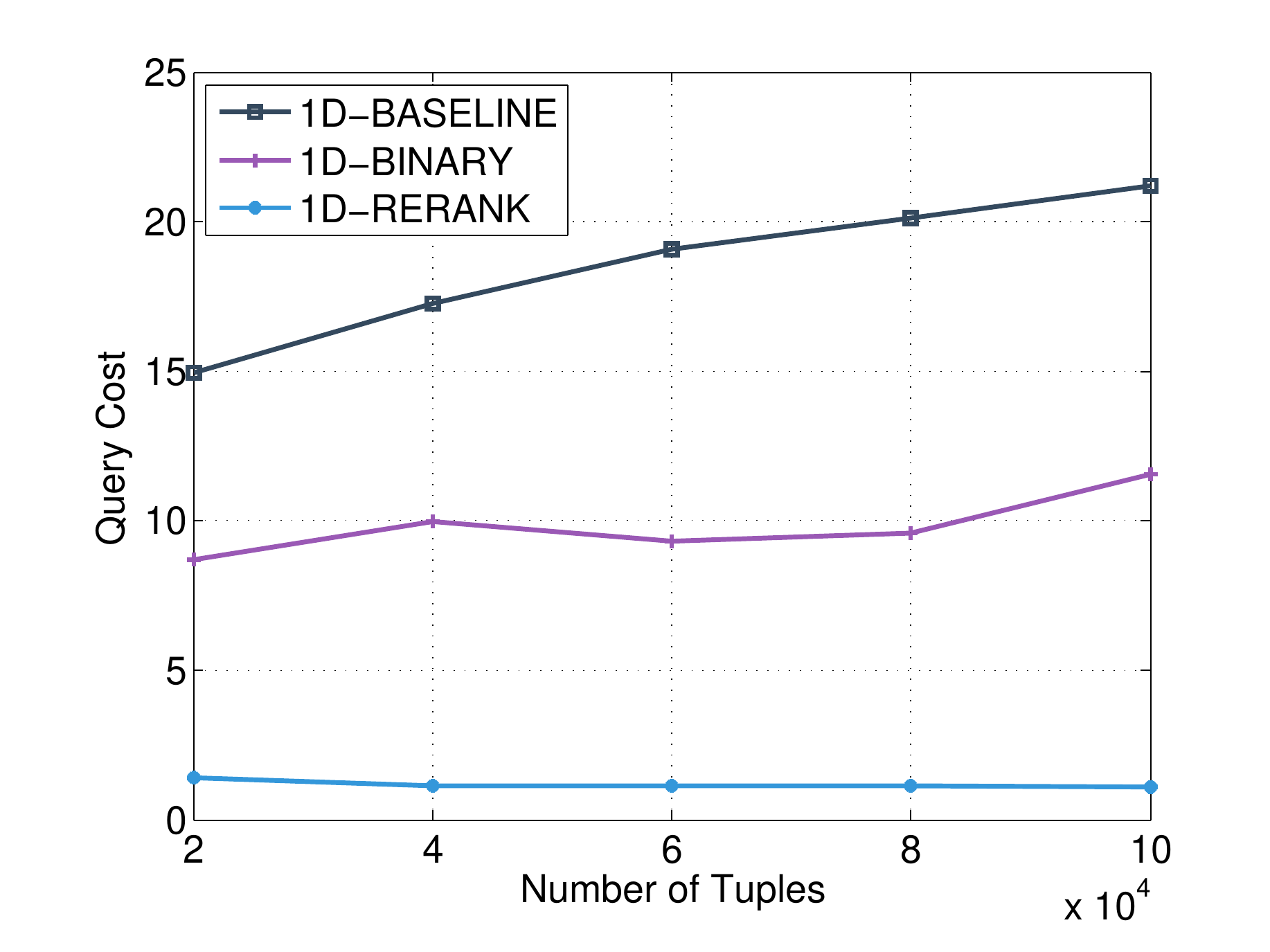}
        \vspace{-2mm}\caption{1D: Impact of $n$ (SR2)}
        \label{fig:1dn2}
    \end{minipage}
    \hspace{1mm}
    \begin{minipage}[t]{0.23\linewidth}
        \centering
        \includegraphics[scale=0.26]{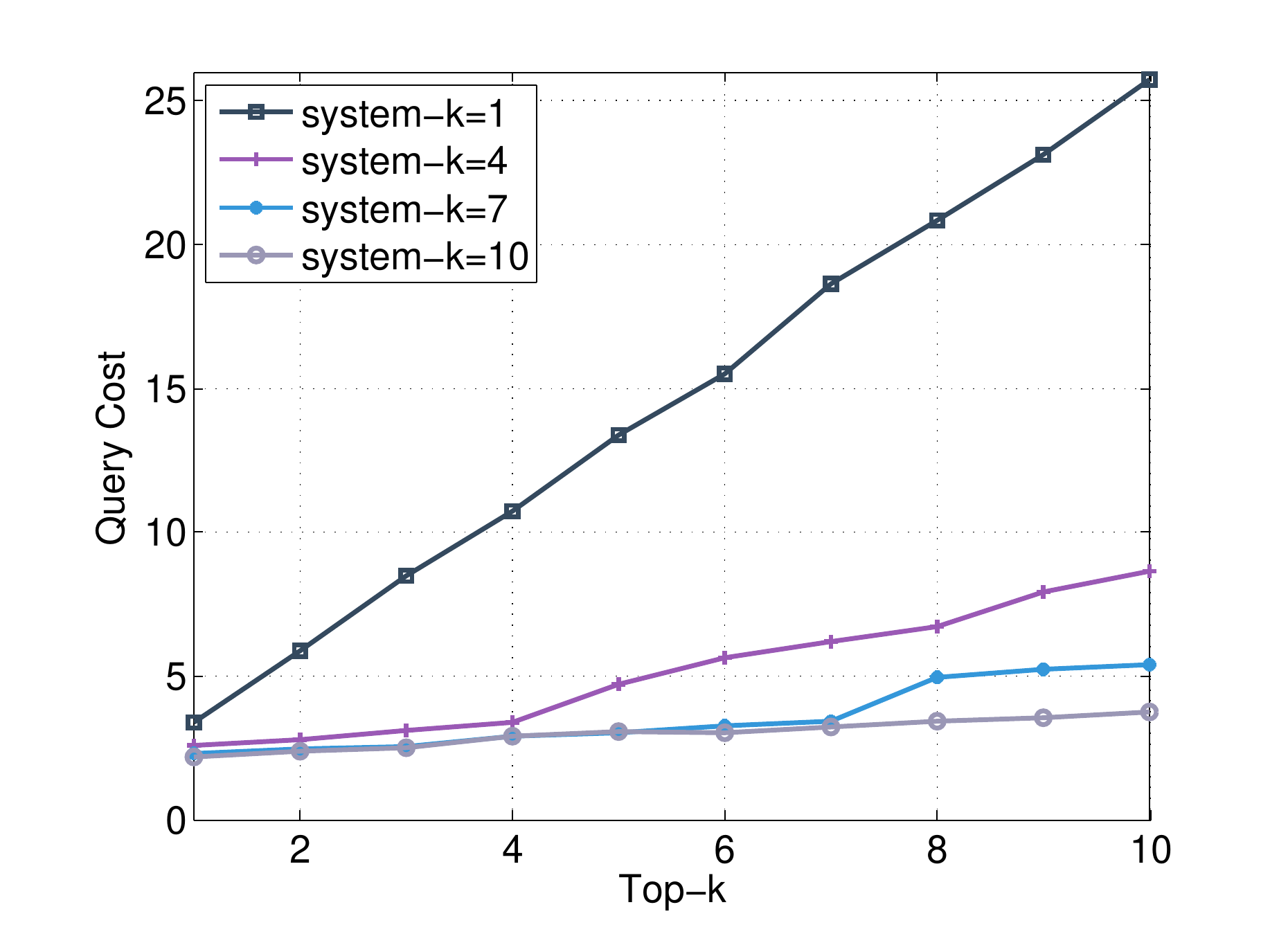}
        \vspace{-2mm}\caption{1D: Impact of System-$k$}
        \label{fig:1dk}
    \end{minipage}
    \hspace{3mm}
    \begin{minipage}[t]{0.23\linewidth}
        \centering
        \includegraphics[scale=0.26]{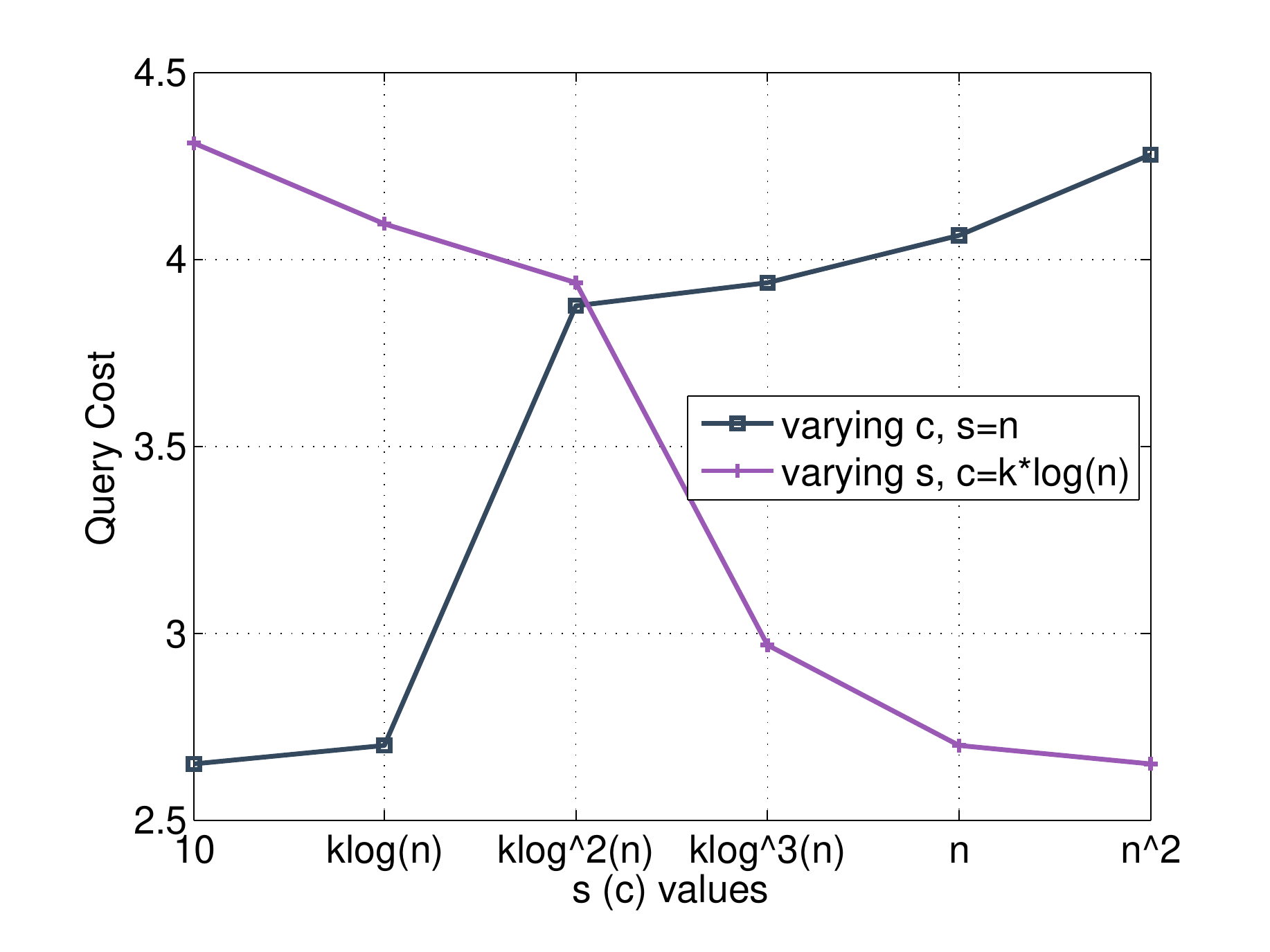}
        \vspace{-2mm}\caption{1D: Impact of $s$ and $c$}
        \label{fig:1dsc}
    \end{minipage}
    \hspace{-2mm}
\end{figure*}

\subsection{1D Experiments}
\label{sec:exp1d}

\noindent
{\bf Constructing Workload of User Preference Queries:}
We tested a diverse set of user-specified queries of the form SELECT * FROM $D$ WHERE $Sel(q)$ ORDER BY $A_i$. Specifically, we randomly selected different subsets of filtering attributes for the WHERE clause, while choosing the (1D) ranking attribute uniformly at random.  This approach has a number of appealing properties. First, it covers diverse cases that include ideal, worst-case and typical scenarios. Second, since 1D-RERANK uses on-the-fly indexing to amortize the cost between different user-issued queries, our diverse query workload simulates a real-world scenario where the service is used by multiple users. For each experimental configuration, we execute each of the queries and report the average query cost. Specifically, for the DOT dataset, we constructed 32 queries of which 25\% do not have any filtering condition.  For BN, we constructed a set of 20 queries, of which 4 have no filtering conditions, while these values are 15 and 2 for YA, respectively. 

\subsubsection{Experiments over the Real-world Dataset}

\noindent {\bf Impact of Database Size and System Ranking Function:}
We started by testing the impact of database size on our algorithms for the two system ranking functions SR1 and SR2. To test databases of varying sizes, we drew 10 simple random samples of a given size from the DOT dataset, and measured the average query cost for the entire workload over these 10 small databases. Figures~\ref{fig:1dn1} and \ref{fig:1dn2} show the average query cost for retrieving the top-1 tuple over SR1 and SR2, respectively. As expected, the database size has negligible impact on the query cost. Also note from the figures that, consistent with our theoretical analysis, Algorithm 1D-RERANK outperformed both 1D-BASELINE and 1D-BINARY significantly. One can also note that the change in system ranking function has a major impact on the performance comparison between 1D-BASELINE and 1D-BINARY, yet has a negligible impact on that of 1D-RERANK, again consistent with our theoretical discussions.

\begin{figure*}[ht]
	\hspace{-1mm}
    \begin{minipage}[t]{0.23\linewidth}
        \centering
        \includegraphics[scale=0.26]{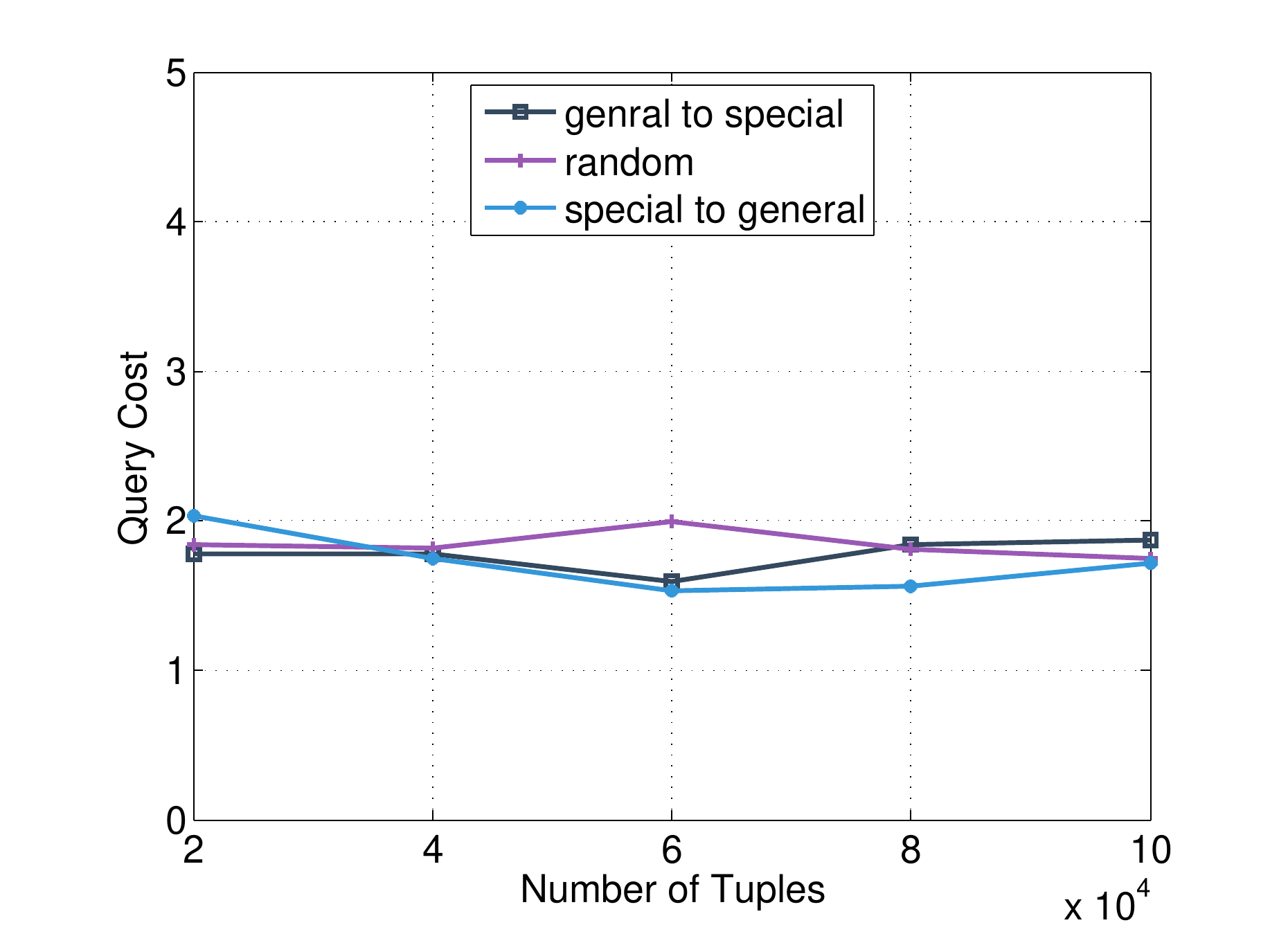}
        \vspace{-2mm}\caption{1D: Impact of Query order in 1D-RERANK}
        \label{fig:1dqo}
    \end{minipage}
    \hspace{1mm}
    \begin{minipage}[t]{0.23\linewidth}
        \centering
        \includegraphics[scale=0.26]{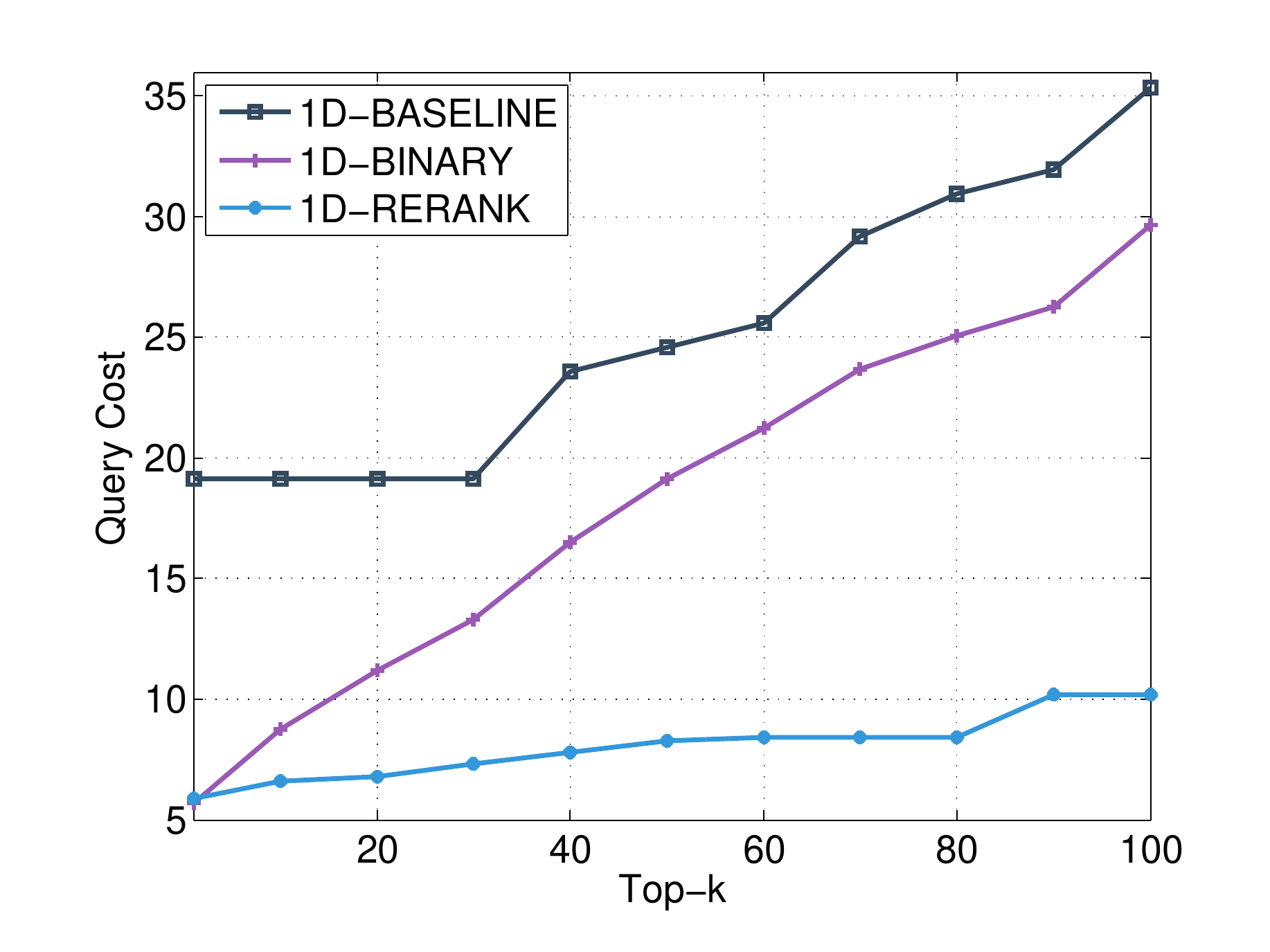}
        \vspace{-2mm}\caption{1D: Top$k$ Query Cost (BN)}
        \label{fig:1dbn}
    \end{minipage}
    \hspace{1mm}
    \begin{minipage}[t]{0.23\linewidth}
        \centering
        \includegraphics[scale=0.26]{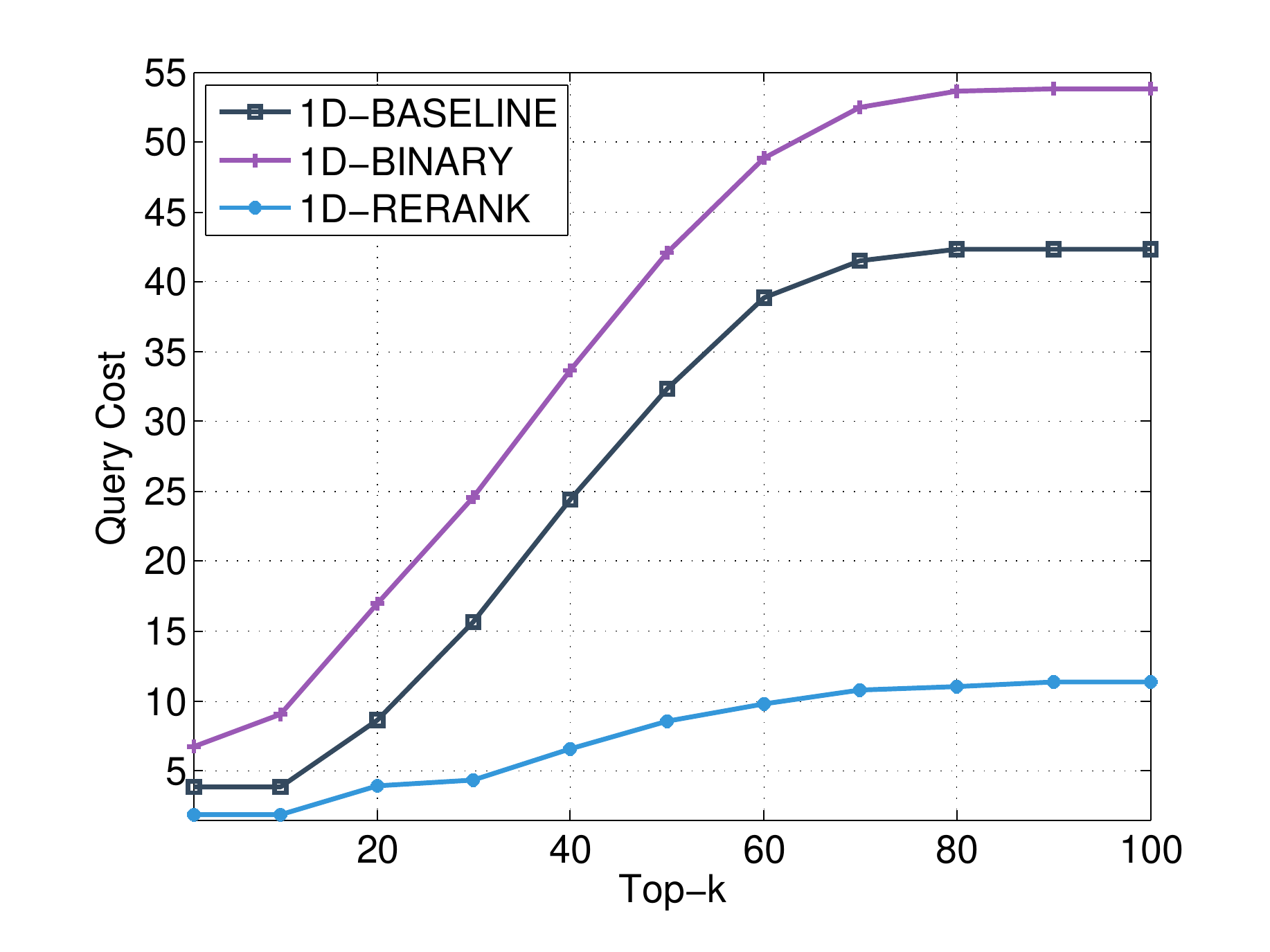}
        \vspace{-2mm}\caption{1D: Top$k$ Query Cost (YA)}
        \label{fig:1dya}
    \end{minipage}
    \hspace{3mm}
    \begin{minipage}[t]{0.23\linewidth}
        \centering
        \includegraphics[scale=0.26]{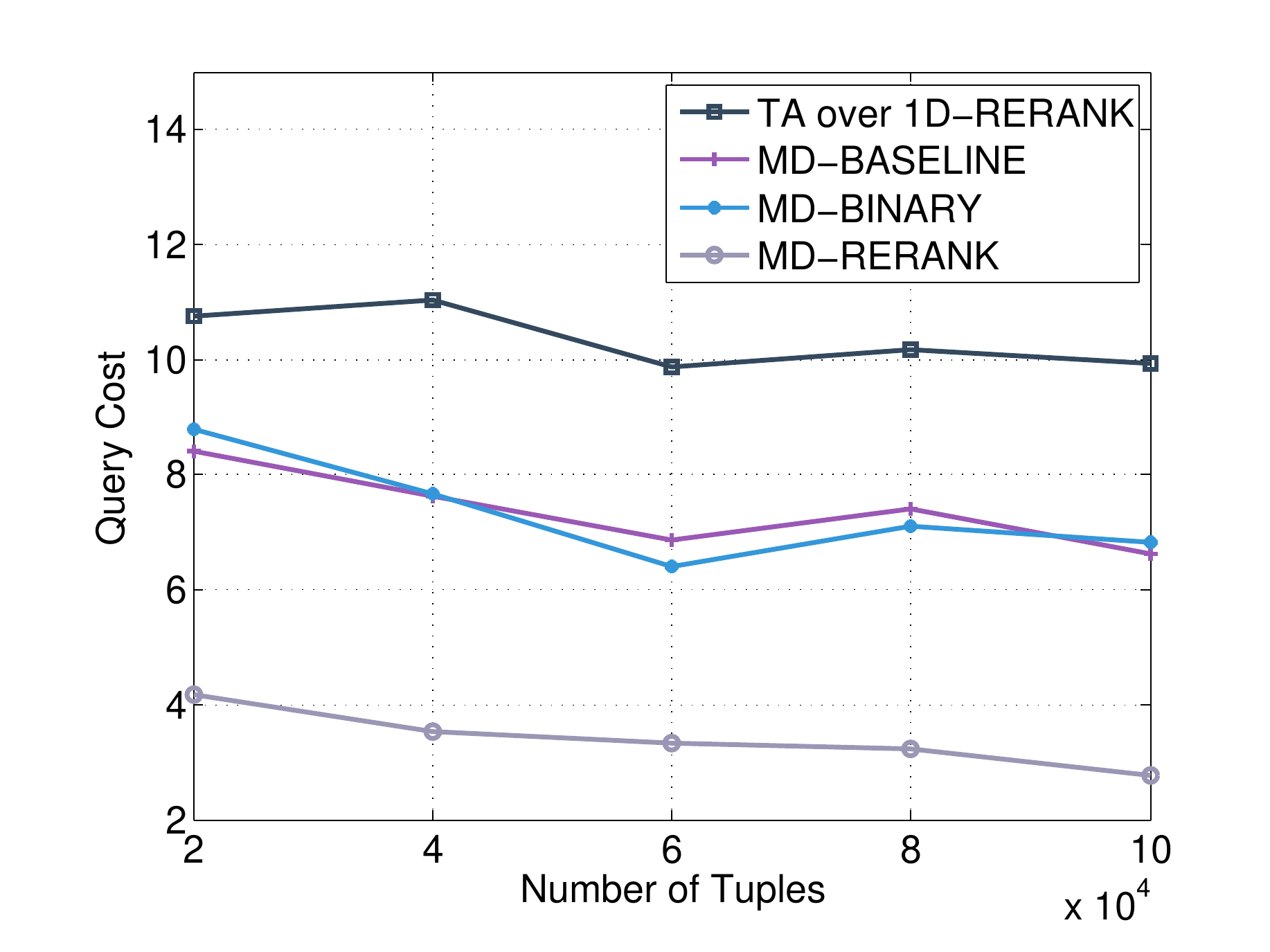}
        \vspace{-2mm}\caption{MD: Impact of $n$ (SR1)}
        \label{fig:mdn1}
    \end{minipage}
    \hspace{-2mm}
\end{figure*}

\begin{figure*}[ht]
	\hspace{-1mm}
    \begin{minipage}[t]{0.23\linewidth}
        \centering
        \includegraphics[scale=0.26]{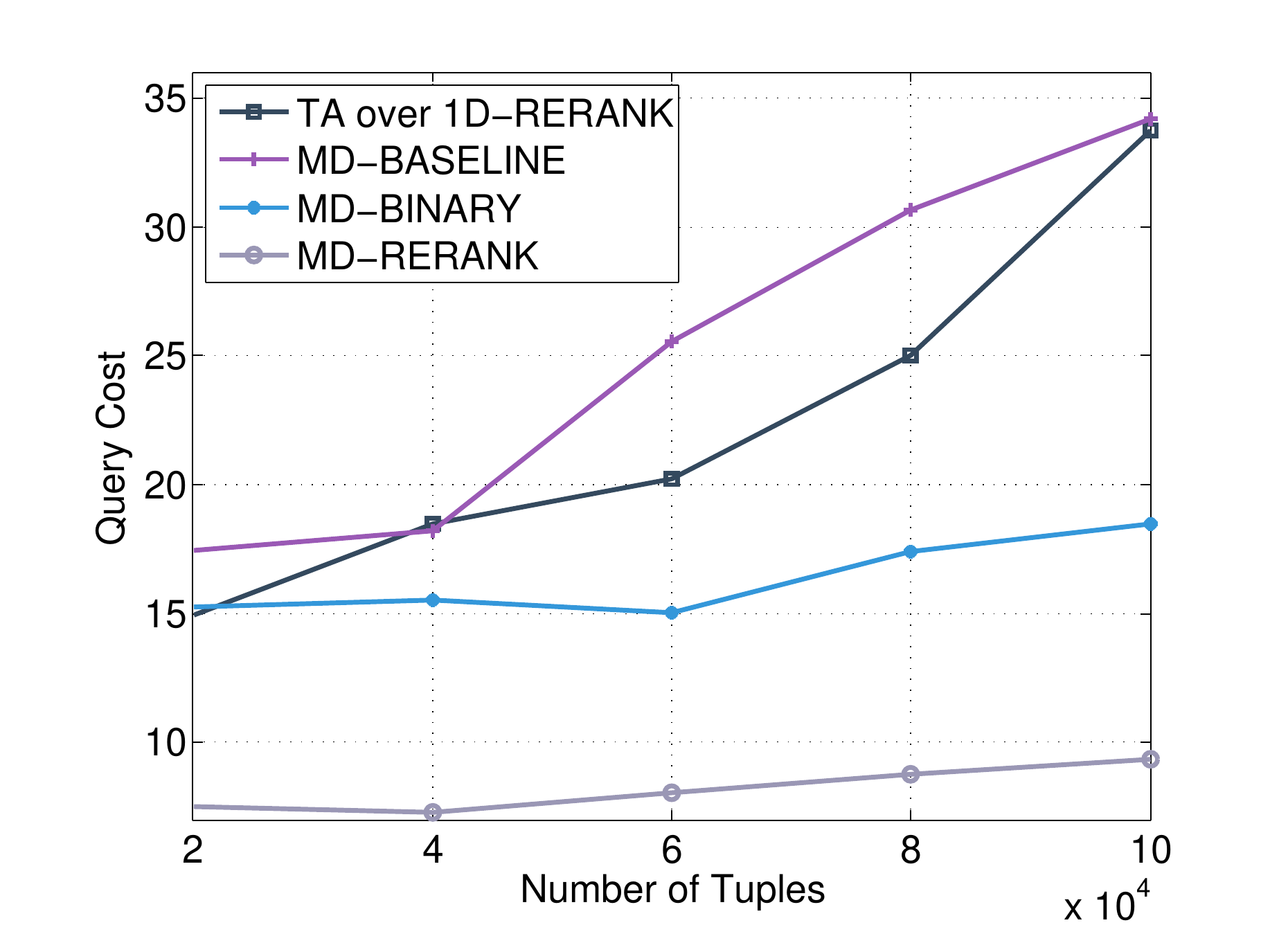}
        \vspace{-2mm}\caption{MD: Impact of $n$ (SR2)}
        \label{fig:mdn2}
    \end{minipage}
    \hspace{1mm}
    \begin{minipage}[t]{0.23\linewidth}
        \centering
		\includegraphics[scale=0.26]{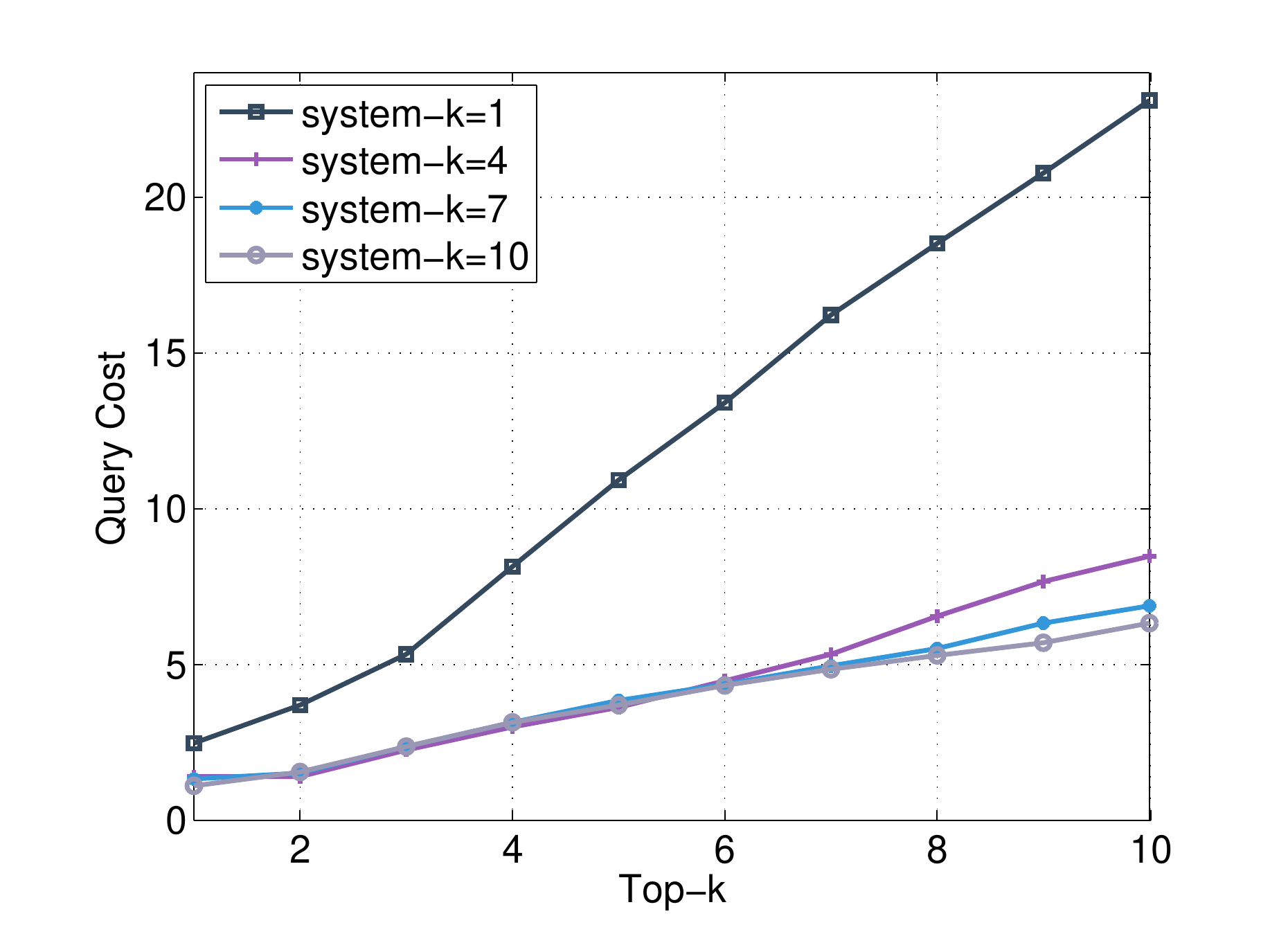}
        \vspace{-2mm}\caption{MD: Impact of System-$k$}
        \label{fig:mdk}
    \end{minipage}
    \hspace{1mm}
    \begin{minipage}[t]{0.23\linewidth}
        \centering
        \includegraphics[scale=0.26]{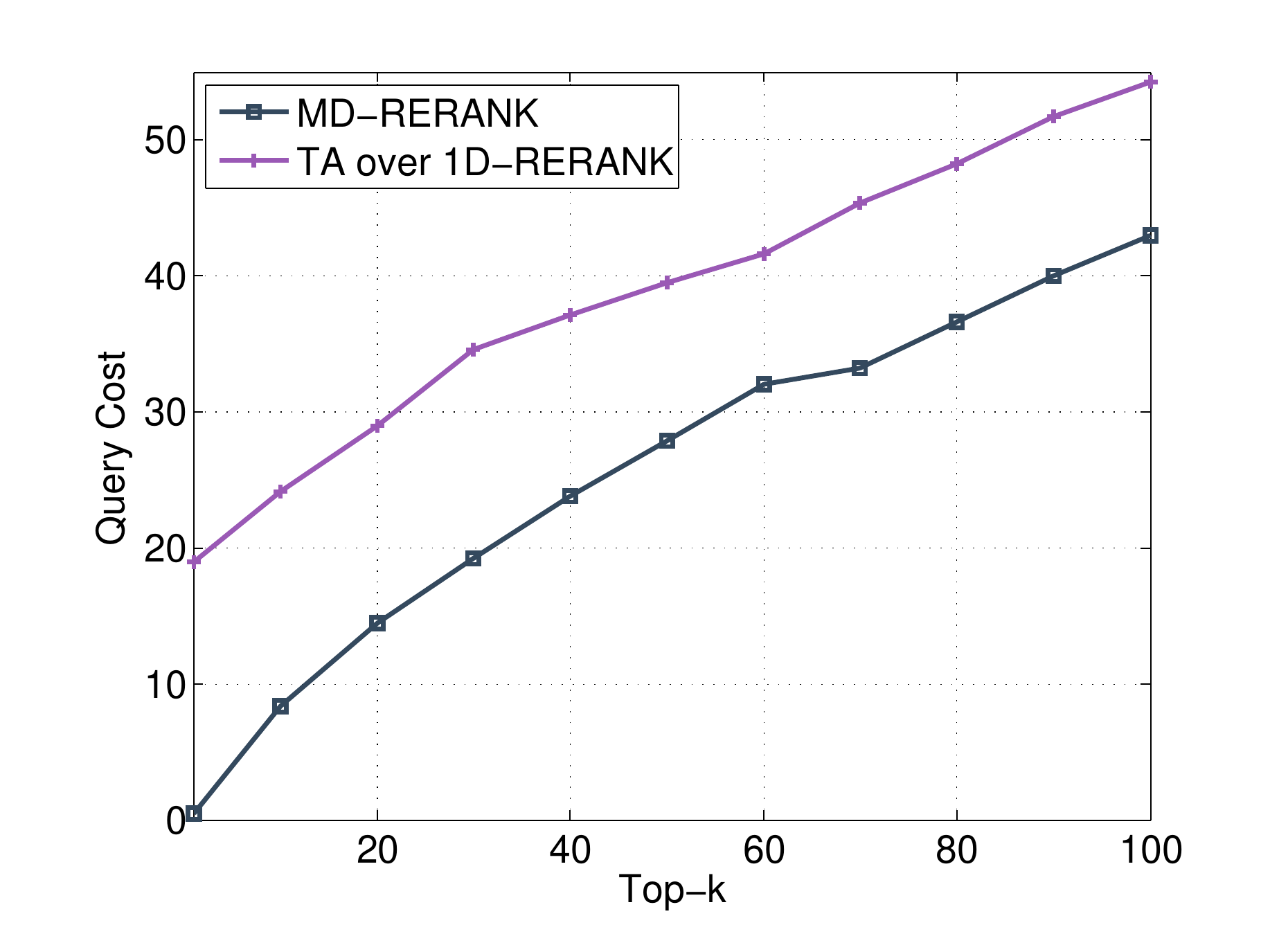}
        \vspace{-2mm}\caption{MD: Top$k$ Query Cost (BN)}
        \label{fig:mdbn}
    \end{minipage}
    \hspace{3mm}
    \begin{minipage}[t]{0.23\linewidth}
        \centering
        \includegraphics[scale=0.26]{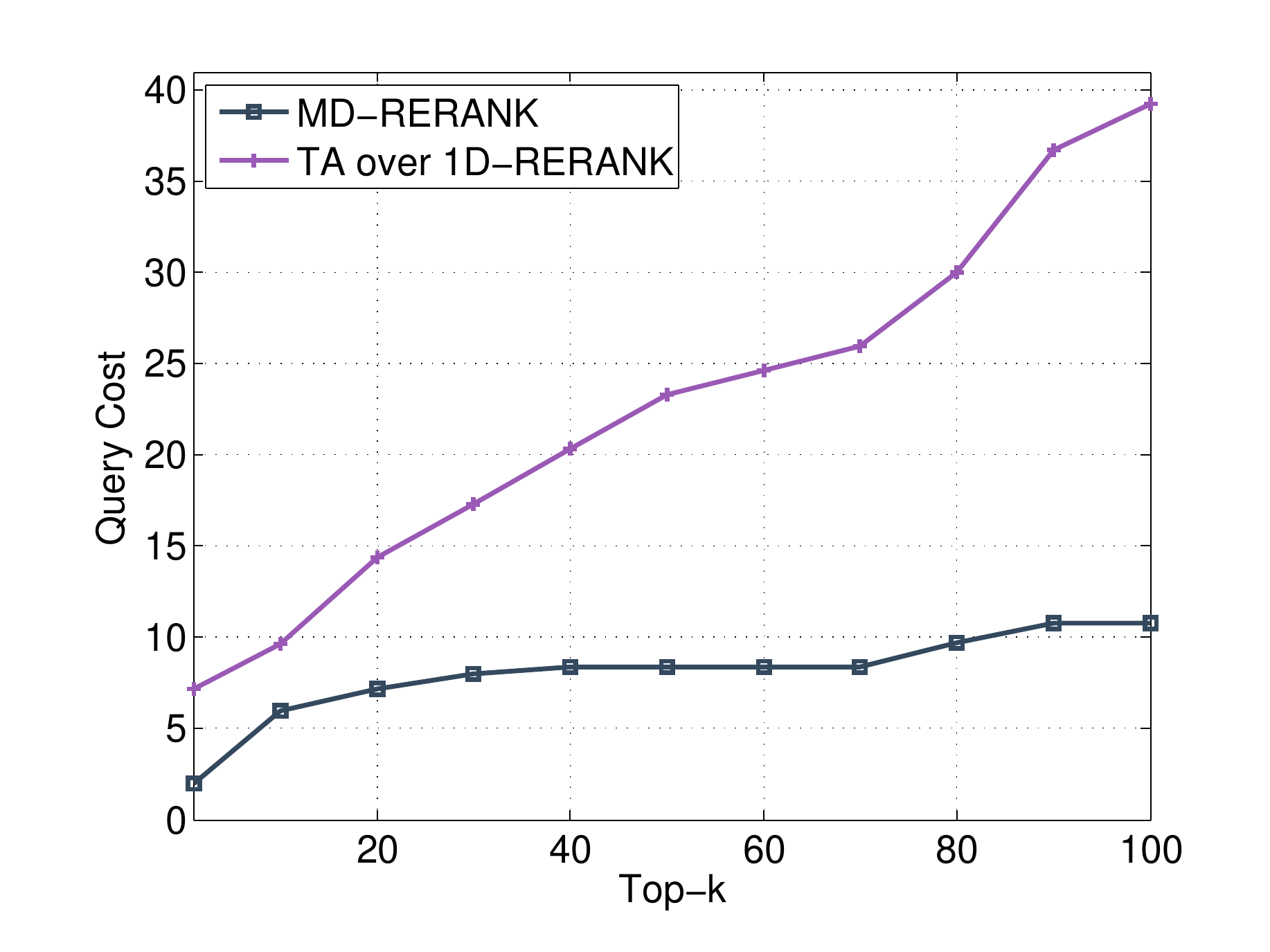}
        \vspace{-2mm}\caption{MD: Top$k$ Query Cost (YA)}
        \label{fig:mdya}
    \end{minipage}
    \hspace{-2mm}
\end{figure*}

\noindent {\bf Impact of Value of $k$:} Figure~\ref{fig:1dk} shows the average (accumulative) query cost for retrieving top-1 to top-10 tuples when the system $k$ varies from 1 to 10.  There are two key observations from the figure: First, our query cost increases (about) linearly with the number of desired top answers, demonstrating its scalability to a large desired answer size. Second, the query cost ,as expected, decreases when the system offers a larger $k$.
 
\noindent {\bf Impact of 1D-RERANK parameters $s$ and $c$:}
Recall from \S~\ref{sec2} that the performance of 1D-RERANK can be parameterized by $s$ and $c$. We conducted two experiments to empirically verify the impact. In the first experiment, we fixed the value of $s$ to $n$ and varied $c$ between $10$ and $n^2$. In the other one, we fixed the value of $c$ to $k\log_2n$ and varied the value of $s$ from $10$ to $n^2$. Figure~\ref{fig:1dsc} shows the average query cost  for both settings. As our theoretical results suggest, setting $c=k\log_2n$ and $s=n$ resulted in the (almost) optimal performance. One can see that further reducing $c$ or increasing $s$ does not have much affect on query cost, yet significantly increases the index size.

\noindent {\bf Impact of Query Order on 1D-RERANK:}
Recall that 1D-RERANK constructs the index on the fly. As such, when queries are issued in different order, the index being maintained may differ.  To test whether the order of user queries have a major effect on the performance of 1D-RERANK, we ran an experiment using SR1 with three query-issuing orders: (1) from low to high selectivity (i.e., from more general to narrower queries), (2) from high to low selectivity, and (3) in a random order. Figure~\ref{fig:1dqo} shows that the query issuance order has a negligible effect on the query cost of 1D-RERANK.

\subsubsection{Online Experiments}
We also conducted two live experiments over Blue Nile and Yahoo! Autos, aiming to retrieve the top-$100$ tuples for each of the user query in the workload. The default system-$k$ for BN and YA are 30 and 15, respectively, with the system ranking function being the default for each website, i.e., descending value of {\it price per carat} for BN and {\it distance from the pre-defined location} for YA.

Figures~\ref{fig:1dbn} and \ref{fig:1dya} show the average query cost for retrieving top-$h$ tuples. 
As expected, 1D-RERANK significantly outperforms the other algorithms for both websites. For BN, while 1D-BINARY performed well in the beginning, it required higher query cost for large values of $h$.
That is because the binary search approach keeps dividing the search area in half until the issued query underflows, thus it is likely to end up with an underflowing query that contains fewer tuples, leading to less saving in the query cost.  For YA, note that 1D-BINARY does not benefit much from the savings and is hence outperformed by 1D-BASELINE.

\subsection{MD Experiments}
\label{sec:expmd}
In this subsection, we compare the performance of MD-RERANK against three baseline methods: the aforementioned ``TA over 1D-RERANK'', as well as MD-BASELINE and MD-BINARY.  Once against, we tested both offline and online settings.

\noindent
{\bf Constructing Workload of User Preference Queries:} The workload is constructed using a process similar to one described in \S~\ref{sec:exp1d}. However, the ranking functions are constructed by selecting a subset from the set of all ranking attributes and choosing different weights between 0 and 1 for each of them. The workload consists of 32, 12 and 10 queries for DOT, BN and YA, respectively, of which 8, 3 and 2 do not have any filtering conditions.

\subsubsection{Experiments over the Real-world Dataset}

\noindent
{\bf Impact of Database Size and System Ranking Function:} The experimental setup was similar to the 1D experiments in \S~\ref{sec:exp1d}. We evaluated our algorithms for different database sizes and system ranking functions SR1 and SR2. Figures~\ref{fig:mdn1} and \ref{fig:mdn2} shows the results for SR1 and SR2 respectively. In both cases, the algorithm MD-RERANK significantly outperformed all three competing baselines.
One may notice an increase in the query cost of the algorithms when $n$ increases in Figures~\ref{fig:mdn2}, and a decrease in Figures~\ref{fig:mdn1}. That is because when system and user-specified ranking function are anti-correlated, the more tuples database has, the more queries are required to find top tuples for the user-specified ranking function (since more tuples are ranked higher than them based on SR2). The case is vice-verse for SR1.

\noindent {\bf Impact of System-$k$:}
We then varied $k$, the number of tuples returned by the web database and measured the average query cost to retrieve top-$10$ tuples for the query workload. 
Figure~\ref{fig:mdk} shows the results. 
As expected, higher values of system-$k$ required lesser query cost to obtain the top-$10$ tuples.
When $k=1$, our algorithms were not able to use the savings by valid queries resulting in a substantial query cost.

\subsubsection{Online Experiments}
We applied MD-RERANK, as well as TA over 1D-RERANK, to retrieve the top-$100$ tuples for each query in the workload. 
Figure~\ref{fig:mdbn} shows the average query cost for the BN experiment. As shown in the figure, MD-RERANK outperformed TA significantly. 
The results for YA experiment is reflected in Figure~\ref{fig:mdya}. The substantial difference in query cost of the algorithms can be explained by the observation by the negative correlation between the ranking tuples in YA queries (for example the cars with higher mileage are probably cheaper). Hence TA algorithm had to issue many GetNext operations before it finds the top tuples.

\section{Related Work}
\noindent
{\bf Top-$k$ discovery methods} can get divided in three main categories: {\it (sorted/random) access-based} methods, {\it layering-based} approaches, and {\it view-based} techniques.
The first series of algorithms take the advantage of the data access methods. For example, \emph{NRA} \cite{fagin2003} assumes the existence of one sorted list of tuples for each attribute, and finds the Top-$k$ only by exploring the lists, while \emph{TA} \cite{fagin2003} applies both random and sorted access. The more advanced algorithms in this category are \emph{CA} \cite{fagin2003}, \emph{Upper/Pick} \cite{bruno2002}, and \cite{marian2004}.
The next category is the set of algorithms, such as \emph{ONION} \cite{chang2000onion} and \cite{xin}, that pre-process the data and index the layers of extremum tuples that gaurantee including the Top-$k$. 
View-based methods such as \emph{PREFER} \cite{PREFER} and \emph{LPTA} \cite{das2006views}, employ the materialized views to increase the efficiency of Top-$k$ discovery process. 
While prior work focused on minimizing the storage overhead of indices/materialized views and the computational overhead of processing top-$k$ queries, we have to focus on minimizing the number of queries issued to the underlying database.  This fundamentally different data access model also leads to a different cost model. For example, many prior work, such as \cite{fagin2003} and \cite{CH02}, assume a separate cost for accessing each attribute and/or evaluating each predicate in the top-$k$ query, 
while in our problem {\em all} attributes of a tuple are returned at once. 

\noindent
{\bf Hidden Databases}
Most of the prior works on the hidden databases relate to sampling, crawling the database, and aggregate estimation. 
Prior works such as \cite{dasgupta2007random, wang2011effective} propose efficient algorithms for collecting unbiased low-variance random samples of a given hidden databasee and \cite{dasgupta2010unbiased, liu2014aggregate} provide unbiased aggregate estimators.
While \cite{sheng2012optimal, raghavan2000crawling, madhavan2008google} aim toward crawling the whole hidden database, \cite{skylinediscovery} only crawls the maxima index.

\noindent
{\bf Top-$k$ queries over Hidden Databases}
As the best of our knowledge, this is the first paper on reranking the query results of a hidden database. The only prior work about Top-$k$ in hidden databases is \cite{thirumuruganathan2013breaking}. Assuming the full knowledge of the system ranking function and attribute domains, its goal is to go beyond the Top-$k$ limitation of the database interface, by partitioning the query space.
\label{sec-related}
\section{final remarks} \label{sec-final}
In this paper, we introduced a novel problem of query reranking, a third-party service that takes a client-server database with a proprietary ranking function and enables query processing according to any user-specified ranking function.  To enable query reranking while minimizing the number of queries issued to the underlying database, we develop 1D-RERANK and MD-RERANK for user-specified ranking functions that involve only one attribute and any arbitrary set of attributes, respectively. Theoretic analysis and extensive experimental results on real-world databases, in offline and online settings, demonstrate the effectiveness of our techniques and their superiority over baseline solutions.

\bibliographystyle{abbrv}
\vspace{-1mm}
\bibliography{ref}

\end{document}